\numberwithin{equation}{section}
\theoremstyle{plain}
\newtheorem{rem}{Remark}[section]
\newtheorem{prop}{Proposition}[section]
\newtheorem{cor}{Corollary}[section]
\newtheorem{definition}{Definition}[section]
\def\build#1_#2^#3{\mathrel{\mathop{\kern 0pt#1}\limits_{#2}^{#3}}}
\newcommand{\bigO}{\ensuremath{\mathcal O}}
\def\videbox{\mathbin{\vbox{\hrule\hbox{\vrule height1.4ex \kern.6em\vrule height1.4ex}\hrule}}}
\def\demend{\hfill $\videbox$\\}
\newcommand{\RR}{{\mathbb R}}
\newcommand{\EE}{\ensuremath{\mathbb E}}
\newcommand{\XX}{\ensuremath{\Sph}}
\newcommand{\ee}{\ensuremath{\varepsilon}}
\newcommand{\SL}{\sum\limits_{l =0}^\infty \sum\limits_{m=-l}^l }
\newcommand{\tht}{\theta}
\newcommand{\Sph}{{\mathbb{S}^{2}}}
\newcommand{\Sc}{{\mathbb{S}}}
\newcommand{\Exp}{\ensuremath{\text{Exp}}}
\newcommand{\Log}{\ensuremath{\text{Log}}}
\newcommand{\QQ}{\ensuremath{\mathbf Q}}
\newcommand{\uu}{\ensuremath{\mathbf u}}
\newcommand{\FF}{\ensuremath{\mathbf F}}
\def\argmin{\mathop{\rm arg \; min}\limits}%
\def\argmax{\mathop{\rm arg \; max}\limits}%
\newcommand{\thefont}[2]{\fontsize{#1}{#2}\fontshape{n}\selectfont}
\newcommand{\1}{\rlap{\thefont{10pt}{12pt}1}\kern.16em\rlap{\thefont{11pt}{13.2pt}1}\kern.4em}
\begin{document}
% \tableofcontents

\title[]{ Regularized estimation of Monge-Kantorovich quantiles for spherical data \vspace{1ex}}
\author{Bernard Bercu, J\'{e}r\'{e}mie Bigot and Gauthier Thurin}
\dedicatory{\normalsize Universit\'e de Bordeaux \\
Institut de Math\'ematiques de Bordeaux et CNRS  (UMR 5251)}
\thanks{The authors gratefully acknowledge financial support from the Agence Nationale de la Recherche  (MaSDOL grant ANR-19-CE23-0017).}

\maketitle

\thispagestyle{empty}

\begin{abstract}
Tools from optimal transport (OT) theory have recently been used to define a notion of quantile function for directional data.
In practice, regularization is mandatory for applications that require out-of-sample estimates. 
To this end, we introduce a regularized estimator built from entropic optimal transport, by extending the definition of the entropic map to the spherical setting.
We propose a stochastic algorithm to directly solve a continuous OT problem between the uniform distribution and a target distribution, by expanding Kantorovich potentials in the basis of spherical harmonics. 
In addition, we define the directional Monge-Kantorovich depth, a companion concept for OT-based quantiles. 
We show that it benefits from desirable properties related to Liu-Zuo-Serfling axioms for the statistical analysis of directional data. 
Building on our regularized estimators, we illustrate the benefits of our methodology for data analysis.

\end{abstract}

\

\noindent \emph{Keywords:} Directional statistics; Monge-Kantorovich quantiles; Entropic Optimal Transport; Spherical harmonics; Fast Fourier transform.\\

\noindent\emph{AMS classifications:} 62H11, 62G30, 62L20.

\section{Introduction}

\subsection{Quantiles for directional data using optimal transport}

In various situations, data are naturally associated to directions that belong to the circle or the unit $d$-sphere $\Sc^{d-1}$ for $d \geq 2$. Such observations, referred to as directional data, can be found in  various applications including wildfires \cite{ameijeiras2018directional}, gene expressions \cite{dortet2008model}, or cosmology \cite{marinucci2008spherical} to name but a few. 
Directional statistics \cite{mardia2000directional,ley2017modern,ley2018applied,pewsey2021recent} is the field that brings together the corresponding models, methods and applications for statistical inference.

In this paper, we focus on the concept of quantiles for directional data. For real random variables, the notion of quantile is a well established statistical concept thanks to the canonical ordering of observations on the real line. 
Beyond the setting of distributions with rotational symmetry \cite{ley2014new}, the absence of a canonical ordering on $\Sc^{d-1}$ makes the definition of quantiles for directional data  more involved.

A recent line of research in nonparametric statistics \cite{chernozhukov2015mongekantorovich,Hallin-AOS_2021,hallin2022nonparametric,HallinReport2021}  deals with the use of the theory of optimal transport (OT) to define Monge-Kantorovich (MK) quantiles\footnote{These are also referred to as center-outward quantiles, \cite{Hallin-AOS_2021}. } for multivariate data. Desirable properties of MK quantiles include ancillarity, distribution-freeness of associated ranks, and consistence with the univariate setting \cite{Hallin-AOS_2021}, together with connections to the celebrated Tukey's notion of statistical depth \cite{chernozhukov2015mongekantorovich} and Mahalanobis distance \cite{hallin2018mahalanobis}. 
The concept of MK quantiles has also proven to be fruitful in many applications, including statistical testing \cite{ghosal2021multivariate,HallinTests2022,Ziang2022,shi2021,shi2022distribution}, regression \cite{Carlier:2022wq,delBarrio2022}, risk measurement \cite{beirlant2019centeroutward}, and Lorenz maps \cite{Fan2022,Hallin_mordant_2022}.

This approach has recently been  applied in  \cite{hallin2022nonparametric} to obtain a new notion of  center-outward quantiles  for directional data. 
Starting from independent and identically distributed ($i.i.d.$) directional data $X_1,\cdots,X_n$ sampled from a target measure $\nu$, the main idea in \cite{hallin2022nonparametric} is to define an empirical quantile function $\QQ_n$ as an optimal matching.
This is obtained by solving an OT problem between the empirical measure $\widehat{\nu}_n = \frac{1}{n} \sum_{i=1}^n \delta_{X_i}$ and a uniform distribution $\mu_n = \frac{1}{n} \sum_{i=1}^n \delta_{U_i}$ based on a regular grid $\{U_1,\cdots,U_n\}$ over the unit sphere $\Sc^{d-1}$. 
In this manner, if $U$ is sampled from ${\mu}_n$, 
the random vector $\QQ_n(U)$ follows the empirical distribution of the observations, which is consistent with the standard univariate quantile function.
On the other hand, the ranks of $X_1,\cdots,X_n$ follow the uniform distribution ${\mu}_n$, independently from $\nu$, which means that they are \textit{distribution-free}.
Building upon this property, \cite{hallin2022nonparametric} investigate rank-based statistical testing.
However, the resulting 
matching between two discrete distributions does not provide out-of-sample estimates. 

In contrast, this paper focuses on applications of MK quantiles where one is interested in observations $X_{\rm new}$ different from $X_1,\cdots,X_n$. 
We give two examples that motivate our upcoming contributions. 
For depth-based classification, the depth (or outlyingness) of a testing point $X_{\rm new}$ must be computed after learning on training data $X_1,\cdots,X_n$. 
For dispersion measurements, volumes of quantile regions $\mathbb{C}_\tau$ can be estimated by rejection sampling, repeatedly considering the event $\{X_{\rm new}\in \mathbb{C}_\tau\}$.

\subsection{Main contributions}

To compute out-of-sample estimates, we pursue regularized estimation of directional MK quantiles. 
To do so, we rely on entropically regularized OT \cite{cuturi2013sinkhorn}, that is well-known for its computational advantages. 
While the computational cost of solving a discrete OT problem is known to scale cubically in the number of observations \cite{peyre2020computational}, entropic regularization improves this to quadratic complexity.

\textbf{A stochastic algorithm.}
We suggest to adapt the stochastic algorithm developed in \cite{BBT2023} dealing with multivariate data in $\RR^d$, that is not assumed to belong to the unit $d$-sphere. 
By expanding Kantorovich dual potentials in their series of Fourier coefficients, each iteration of the stochastic algorithm in   \cite{BBT2023} reduces to the use of the Fast Fourier Transform (FFT). 
On the sphere $\Sph$, dual potentials can be parameterized via spherical harmonics coefficients instead, that is the analog of the Fourier basis for square-integrable functions on $\Sph$. 
In this manner, using a sequence of random variables $X_1,\ldots,X_n$ sampled from a target distribution $\nu$ supported on $\Sph$, we construct a stochastic algorithm in the space of spherical harmonics coefficients.
In practice, this algorithm depends on the choice of a grid of points in $\Sph$ of size $\bigO\left( p^2 \right)$ to implement a FFT on $\Sph$  \cite{wieczorek2018shtools}.
The computational cost at each iteration is thus of order $\bigO\left( p^2 \log^2(p)\right)$  \cite{KUNIS200375}. 
A key motivation for such stochastic algorithm is its ability to solve OT in an online fashion, where data arrives sequentially. 
Also, it overcomes the need to discretize the reference distribution, by learning directly the continuous OT problem.

\textbf{A regularized quantile function.}
Furthermore, we also derive an estimator of the quantile function for spherical data that is smoother than $\QQ_n$.
This regularized quantile function is a directional counterpart of the \textit{entropic map} \cite{pooladian2021entropic}, following classical results on entropic optimal transport in the euclidean space $\RR^d$. 
 Its main benefit, beyond computational complexity, is to allow for out-of-sample estimation. 

\textbf{Directional statistical depth.}
Finally, we introduce the new notion of MK directional statistical depth, in accordance with the euclidean MK depth \cite{chernozhukov2015mongekantorovich}. 
We discuss its properties relative to traditional Liu-Zuo-Serfling axioms for the statistical analysis of directional data.
Moreover, we study statistical applications built from it, and provide a comparison with other estimators, to better highlight the potential of entropic regularization for spherical quantiles estimation and out-of-sample estimation.

\subsection{Organization of the paper}

In Section \ref{sec:related}, we discuss related works on alternative definitions and estimators of quantiles for directional data. The definitions of OT-based directional distribution and quantile functions on $\Sph$  are recalled in Section \ref{sec:MainDefinitions}. Our approach to obtain regularized estimators of MK quantiles on $\Sph$ is detailed in Section \ref{sec:regest}. 
A study of the MK directional statistical depth, is proposed in Section \ref{depthstat} and illustrated with simulated data. 
Numerical experiments are reported in Section \ref{numexp} to highlight the benefits of entropic regularization. 
Concluding remarks and discussions on this work are proposed in Section \ref{sec:diss}. Finally, mathematical details and proofs are deferred to technical appendices.
Python codes for the experiments carried out
in this paper are available at
\url{https://github.com/gauthierthurin/SphericalQuantiles}.

\section{Related works} \label{sec:related}

\subsection{Directional quantiles and depth}

Very much related to multivariate quantiles is the idea of statistical depth and the center-outward ordering of data, with a long history dating back to Tukey's work \cite{tukey75}.
The issue of \textit{picturing data}, in Tukey's terminology, has since gained very much attention, which prevents us  from providing an exhaustive survey, and we
rather refer to \cite{liu1999multivariate,barnett1976ordering,serfling2006depth,mosler2013depth}.  
The first notion of directional depth function was introduced in \cite{small1987measures}, followed by the work of \cite{liu1992ordering} that developed three different approaches. 
The properties of the latter have been studied and applied for inference in \cite{rousseeuw2004characterizing,agostinelli2013nonparametric}.
The required computational effort led the authors of \cite{ley2014new} to build the angular Mahalanobis depth, and, doing so, they provided the first concept of directional quantiles.
Despite appealing properties, the obtained contours are constrained to be rotationally symmetric, motivating the elliptic counterpart from \cite{hauch2022quantiles}.
Still, the elliptic assumption is a strong one as discussed in \cite{hallin2022nonparametric}.
Facing either this lack of adaptiveness or the computational burdens of previous references, distance-based depths were proposed in \cite{pandolfo2018distance}, even though not explicitly related to the notion of quantiles. 
These directional depth functions can been applied, for instance, in data analysis and inference \cite{liu1992ordering,small1987measures,agostinelli2013nonparametric,konen2023spatial}, classification \cite{pandolfo2018distance,liu1992ordering,demni2019cosine,Demni2021,Nagy2023,konen2023spatial} or clustering \cite{Pandolfo2023}. 
Among recent years, two concepts of multivariate quantiles have emerged in $\RR^d$, namely the spatial quantiles \cite{chaudhuri1996geometric} and the center-outward ones \cite{chernozhukov2015mongekantorovich}, both gathering most of commonly sought-after properties.
More importantly here, these promising ideas have successfully been extended to directional data \cite{konen2023spatial,hallin2022nonparametric}, improving on the lack of adaptiveness of Mahalanobis quantiles \cite{ley2014new}, with desirable asymptotic results inherited from the formalism of quantiles. 

 To put it in a nutshell, existing concepts of directional quantiles include Mahalanobis quantiles, Spatial quantiles and Monge-Kantorovich ones.
On the one hand, a statistical depth associated to a notion of quantiles is amenable to benefit from the best of both worlds, that is adaptivity to the underlying geometry and consistency of empirical versions, as argued for instance in \cite{konen2023spatial}.  
On  another hand, in comparison with other directional quantiles, Monge-Kantorovich ones present an additional descriptive power inherited from the fact that $\QQ(U)\sim\nu$. 
A direct consequence is that $\QQ$ must contain all the available information, which is appealing with the purpose of summing up unknown features of multivariate data. 
Lastly, we mention that the estimation of density level sets on $\mathbb{S}^{d-1}$, \cite{saavedra2022nonparametric}, is a related but different objective.
Indeed, the density allows to capture local information such as multimodality (as with a histogram) while quantiles focus on a global centrality information (as with a boxplot).

\subsection{Regularized OT maps on the sphere}

In \cite{hallin2022nonparametric}, the estimation of directional quantiles amounts to a discrete OT problem. 
However, such estimator is piecewise constant, restricted to take its values in the set of  observations. 
There, regularization naturally enters the picture, as highlighted in $\RR^d$ in several works, either after the estimation of unregularized OT \cite{Hallin-AOS_2021,beirlant2019centeroutward}, or with entropic optimal transport (EOT) \cite{Carlier:2022wq,BBT2023,Masud2021}.
Facing the same issue in the present directional context, the entropic map is extended here to the non-Euclidean setting, similarly to what is done in \cite{cuturi2023monge} for general costs in $\RR^d$. 
To the best of our knowledge, this appears to be new, although it benefits from explicit formulation tractable in linear time, given the dual potentials solving EOT. 
This belongs to the line of work estimating OT maps on manifolds.  
A meaningful approach is to use deep-learning architectures \cite{cohen2021riemannian}.  
In particular, \cite{pegoraro2023vector} extends \cite{hallin2022nonparametric} to vector quantile regression on manifolds, with convincing numerical performance. 
Some works leverage mesh adaptation \cite{weller2016mesh} or PDE \cite{hamfeldt2022convergence}.
The recent theoretical work \cite{frungillo2024discrete} provides complete convergence result, building on an interesting idea of discrete OT maps through geometric medians, akin to the barycentric projection in $\RR^d$. 
Following insights from previous works on Euclidean data \cite{Carlier:2022wq,BBT2023,Masud2021}, 
 EOT in the quantiles' context is a complementary perspective that enforces smoothness, to fulfill our objective of depth-based data analysis.

\section{Directional distribution and quantile functions on the 2-sphere based on optimal transport}\label{sec:MainDefinitions} 

In this section, we introduce the main definitions of OT-based distribution and quantile functions for spherical data, beginning with   notation related to spherical harmonics and differentiation on $\Sph$. 

\subsection{ Context} 

The unit $2$-sphere is defined by 
$
\Sph = \{ x \in \RR^3 : \Vert x \Vert = 1 \}.
$ 
The points $x \in \Sph$ can be written in spherical coordinates, with longitude $\phi \in [ -\pi, \pi]$ and colatitude $\theta \in [0,\pi]$, as
\begin{equation}\label{Phispheric_coord}
x = \Phi(\theta,\phi) := (\cos \phi \sin \theta, \sin \phi\sin\theta,\cos\theta).
\end{equation}
On the $2$-sphere, the geodesic distance is 
$
d(x,y)= \arccos (\langle x,y\rangle),
$
and the squared Riemannian distance is
\begin{equation}\label{def_cost}
c(x,y) = \frac{1}{2} d(x,y)^2.
\end{equation}
Both $d$ and $c$ are continuous and bounded as $d(x,y)\in [0,\pi]$. Moreover,
$(\Sph,d)$ is a separable complete metric space, with Borel algebra $\mathcal{B}^2$.
The surface measure $\sigma_\Sph$ on $\Sph$ is given by 
$$
\int_\Sph f(x) d\sigma_\Sph(x) = \int_0^\pi \int_{-\pi}^\pi f(\Phi(\theta,\phi)) \sin\theta d\phi d\theta,
$$
and the uniform measure on $\Sph$ writes $\mu_\Sph = \frac{1}{4\pi}\sigma_\Sph$. 
%As usual,
The space of all equivalence classes of square integrable functions on $\Sph$ is denoted by %$L^2(\Sph,\mathcal{B}^2,\sigma_\Sph)$
$L^2(\Sph)$. 
We define the \textit{spherical harmonic} function of degree $l \in \mathbb{N}^*$ and order $m\in\{-l,\cdots,l\}$ by 
$$
Y_l^m(x) = Y_l^m(\Phi(\theta,\phi)) = \sqrt{\frac{2l+1}{4\pi} \frac{(l-m)!}{(l+m)!} } P_l^m(\cos \theta)e^{im\phi},
$$
where the associated Legendre functions $P_l^m: [-1,1]\rightarrow \RR$ verify, for $l\in \mathbb{N}^*$ and $m \geq 0$, 
$$
P_l^m(t) = \frac{(-1)^m}{2^l l!} (1-t^2)^{m/2} \frac{d^{l+m}(t^2-1)^l}{dt^{l+m}}
\quad
\text{and}
\quad
P_l^{-m}(t) = (-1)^m \frac{(l-m)!}{(l+m)!} P_l^m(t).
$$
Importantly, the spherical harmonics form an orthonormal basis of $L^2(\Sph)$, so that every function $f \in L^2(\Sph)$ is uniquely decomposed, for $x \in \Sph$, as
\begin{equation}\label{spheric_harmonics}
f(x) = \sum_{l=0}^\infty \sum_{m = -l}^l \bar{f}_l^m Y_l^m(x),
\end{equation}
where the sequence of spherical harmonic coefficients $\bar{f} = (\bar{f}_l^m)$ verifies
\begin{equation}\label{spheric_harmonics_coefs}
\bar{f}_l^m = \frac{1}{4\pi}\int_\Sph f(x) \overline{Y_l^m(x)} d\sigma_\Sph(x).
\end{equation}
We refer to \cite{chirikjian2000engineering} for an introduction to Fourier analysis on the sphere. 
Below, we introduce a few notation from differential geometry that one can find for instance in \cite{ferreira2014concepts}. 
At any point $x\in \Sph$, the tangent space is $\mathcal{T}_x\Sph = \{ y \in \RR^3 : \langle x,y \rangle = 0\}$, and the associated orthogonal projection $\rho_x : \RR^3 \rightarrow \mathcal{T}_x\Sph$ verifies
\begin{equation}\label{def_rhox}
\rho_x \xi = (I - xx^T)\xi = \xi - \langle \xi,x \rangle x.
\end{equation}
The exponential map at $x\in \Sph$,
$\Exp_x :\mathcal{T}_x\Sph \rightarrow \Sph$,
has the explicit form
$$
\Exp_x (v) = \cos(\Vert v \Vert) x + \sin(\Vert v \Vert)\frac{v}{\Vert v \Vert},
$$
and its inverse $\Log_x$ writes
\begin{equation}\label{Logx}
    \Log_x(z) = \frac{d(x,z)}{\sqrt{1-\langle x,z \rangle^2}}\rho_x z = d(x,z) \frac{\rho_x (z-x)}{\Vert \rho_x (z-x) \Vert}.
\end{equation}
The Riemannian gradient is given by orthogonally projecting Euclidean derivatives onto the tangent space.
For a smooth function $f : \Sph \rightarrow \RR$, its Riemannian gradient $\nabla f(x)$ at $x$ is thus defined by 
\begin{equation}\label{defRiemannianGrad}
\nabla f(x) = \rho_x D f(x), 
\hspace{1cm} \mbox{where} \hspace{1cm}
D f(x) = \Big( \frac{\partial f(x)}{\partial x_i} \Big)_{i,j \in \{1,2,3\}}.
\end{equation}
The Riemannian Hessian $\nabla^2 f(x) : \mathcal{T}_x\Sph \rightarrow \mathcal{T}_x\Sph$ at $x$ is defined by the same token,
\begin{equation}\label{defRiemannianHess}
\nabla^2 f(x) = \rho_x \Big[ D^2 f(x) - \langle D f(x),x\rangle I \Big]
\hspace{0.5cm} \mbox{with} \hspace{0.5cm}
D^2f(x) =\Big( \frac{\partial^2 f(x)}{\partial x_i \partial x_j} \Big)_{i,j \in \{1,2,3\}}.
\end{equation}

\subsection{Main definitions}\label{maindef}

Here, we fix the definitions introduced in \cite{hallin2022nonparametric}.
 Given $\mu$ and $\nu$ two probability measures supported on $\Sph$, we say that $T$ pushes forward $\mu$ to $\nu$, denoted by $T_\#\mu = \nu$, if, for each measurable $B \in \mathcal{B}^2$, 
$
\nu(B) = \mu(T^{-1}(B)). 
$
Then, for the quadratic cost $c$ defined in \eqref{def_cost}, Monge's formulation of the OT problem between $\mu$ and $\nu$ writes
\begin{equation}\label{Monge}
\min\limits_{T : T_\#\mu = \nu} \EE_{X \sim \mu} \left[c(X,T(X)) \right].
\end{equation}
The solution of \eqref{Monge} is referred to as a \textit{Monge map}, while $\mu$ and $\nu$ are called the \textit{reference} and \textit{target} measures, respectively. Before considering the existence of Monge maps,
we recall the key definition of $c$-transforms as stated in 
\cite{mccann2001polar}, that is equivalent to the formulation of  \cite{hallin2022nonparametric}[Definition 2].

\begin{definition}\label{c_transform}
Given a function $\psi : \Sph \rightarrow \RR$, its $c$-transform is defined by 
$
\psi^c(y) = \inf_{x \in \Sph} \{ c(x,y) - \psi(x) \}.
$
Then, $\psi$ is said to be $c$-concave when $\psi^{cc} := (\psi^c)^c = \psi$.
\end{definition}

The proper summary of \cite{hallin2022nonparametric}[Proposition 1] highlights that $c$-concavity is related to optimality in Monge's OT problem \eqref{Monge}.
For our continuous and bounded cost $c$, a sufficient assumption is that the reference measure belongs to $\mathbf{B}_2$, the family of $\sigma_\Sph$-absolutely continuous distributions with densities bounded away from $0$ and $\infty$, see \cite{mccann2001polar}[Theorem 9], which is the case for the uniform $\mu_\Sph$.
This enables the definition of distribution and quantile functions in \cite{hallin2022nonparametric}. 

%Hereafter, we provide such a result, which reformulates the proper summary of \cite{hallin2022nonparametric}[Proposition 1]. 
%\begin{prop}
%Let $\mu, \nu$ be probability measures on $\Sph$, with $\mu \in \mathbf{B}_2$. Then 
%\begin{enumerate}
%\item Monge's problem \eqref{Monge} admits a $\mu$-$a.s.$ unique solution, denoted by $\FF$.
%\item there exist $c$-concave differentiable mappings $v : \Sph \rightarrow \RR$ such that, $\sigma_\Sph$-$a.e.$,
%$$ \FF(x) = \Exp_x(-\nabla v(x)).$$ 
%\item If, moreover, $\nu \in \mathbf{B}_2$, then $\QQ(x) = \Exp_x(-\nabla u(x))$,
% with $u$ the $c$-transform of $v$, is the $\nu$-$a.s.$ unique minimizer of \eqref{Monge} over $\textbf{S}(\nu,\mu)$, and satisfies 
%$$
%\QQ(\FF(x)) = x \mu-a.s. \quad \text{and} \quad \FF(\QQ(x)) = x \nu-a.s.
%$$
%\end{enumerate}
%\end{prop}

\begin{definition}\label{def:quantile}
The directional MK quantile function of the arbitrary probability measure $\nu$ is the $\mu_\Sph$-$a.s.$ unique map $\QQ : \Sph \rightarrow \Sph$ such that $\QQ_\# \mu_\Sph = \nu$ and there exists a $c$-concave differentiable mapping $\psi : \Sph \rightarrow \RR$ such that, $\sigma_\Sph$-$a.e.$,
$$ 
\QQ(x) = \Exp_x(-\nabla \psi(x)).
$$ 
In addition, the directional MK distribution function of $\nu$ is given by
$$ \FF(x) = \Exp_x(-\nabla \psi^c(x)).$$ 
\end{definition}

As soon as $\nu$ belongs to $\mathbf{B}_2$,
\cite{mccann2001polar}[Corollary 10] ensures that $\FF=\QQ^{-1}$ almost everywhere, whereas $\QQ^{-1}$ might not exist if $\nu$ is not absolutely continuous.
%Compared to \cite{hallin2022nonparametric}, our definition begins with $\QQ$ instead of $\FF$, and it does not require the absolute continuity for $\nu$.
%This follows developments from \cite{ghosal2021multivariate} for measures supported in $\RR^d$. 

\begin{rem}[Regularity]\label{rem_regularity}
The regularity of OT maps on the sphere is a delicate subject that has inspired a number of works, \cite{von2013regularity,loeper2011regularity,Delanoe2006,LoeperVillani}.
Any $c$-concave potential $\psi$ is twice differentiable almost everywhere \cite{cordero2001riemannian}[Proposition 3.14]. 
For further regularity, an appropriate requirement is that the underlying measures are smooth and belong to $\mathbf{B}_2$.
In particular, if, a minima, $\nu$ has density $f\in C^{1,1}(\Sph)$ with respect to $\sigma_\Sph$, then the MK quantile function $\QQ$ belongs to $C^{2,\beta}(\Sph)$ for all $\beta \in \,]0,1[$, \cite{LoeperVillani}. 
\end{rem}

%\begin{rem}[Gradient mappings]
%A gradient mapping is built from a $c$-concave potential $\psi$, through $x\mapsto \Exp_x(-\nabla \psi(x))$.
%A statistical model based on convex combinations of such maps was introduced in \cite{sei2011gradient}, and further used for directional data in \cite{sei2013jacobian}.
%With the viewpoint of \cite{hallin2022nonparametric}, this amounts to a barycenter model for MK quantile functions. 
%\end{rem} PHRASE ENLEVEE

In view of the proof of \cite{mccann2001polar}[Theorem 9], $\psi$ and its $c$-transform $\psi^c$ in Definition \ref{def:quantile} maximize the dual version of Kantorovich's problem, in the sense that
\begin{equation}\label{Kanto}
(\psi,\psi^c) \in \argmax_{(u,v)\in \text{Lip}_c} \int_\Sph u (x) d\mu_\Sph(x) +  \int_\Sph v (y) d\nu(y),
\end{equation}
with 
$
\text{Lip}_c = \left\{ u,v : \Sph \rightarrow \RR \text{ continuous } ; u(x) + v(y) \leq c(x,y) \right\}. 
$
%Because the sphere has a finite diameter $\vert \Sph \vert := \sup \{ d(x,y) ; x,y \in \Sph\} = \pi$ \cite{mccann2001polar}[Proposition 3] ensures that \eqref{Kanto} admits a solution of the form $(\psi,\psi^c)$ for some $c$-concave $\psi$. 
The semi-dual version of \eqref{Kanto} refers to the optimisation over a single dual variable $u$, the other being taken as $u^c$.
%Our proposal is to build upon this semi-dual problem to construct regularized estimators for $\FF$ and $\QQ$. 
 
%Before that, we recall the definitions of directional quantile contours and regions from \cite{hallin2022nonparametric}, that simplify in dimension $d=3$. 

A central point in $\Sph$ must be chosen for the reference
%uniform distribution
 $\mu_\Sph$,
 %in view of defining
 to define
  nested regions with $\mu_\Sph$-probability $\tau\in [0,1]$. 
A well-suited notion of central point 
is the Fr\'echet median\footnote{This choice is motivated by its simplicity, and it implicitly assumes that the Fr\'echet median is unique. However, this may not be the case, even at the population level. For instance, this happens if $\nu$ charges antipodal regions equally.}
\begin{equation}\label{FrechetMed}
\theta_{M} = \argmin_{z \in \Sph} \EE_{Z\sim\nu}[d(Z,z)],
\end{equation}
that can be computed with the package \textit{geomstats} \cite{JMLR:v21:19-027}. 
Then, the spherical cap with $\mu_\Sph$-probability $\tau\in [0,1]$ centered at $\FF(\theta_M)$ is 
$$
\mathbb{C}^U_\tau = \left\{ x \in \Sph : \langle x, \FF(\theta_M)\rangle \geq 1 - 2\tau \right\},
$$
with boundary $\mathcal{C}^U_\tau = \left\{ x \in \Sph : \langle x, \FF(\theta_M)\rangle = 1 - 2\tau \right\}$ a \textit{parallel} of order $\tau$. 
This defines a rotated version of the usual latitude-longitude coordinate system \eqref{Phispheric_coord}, with respect to the pole $\FF(\theta_M)$, as follows. 
Any $x \in \Sph$ decomposes into
\begin{equation}\label{tangent_normal}
x = \langle x, \FF(\theta_M)\rangle \FF(\theta_M) + \sqrt{1 - \langle x, \FF(\theta_M)\rangle^2} \mathbf{S}_{\FF(\theta_M)}(x),
\end{equation}
where $\langle x, \FF(\theta_M)\rangle$ is a latitude\footnote{ The example $\FF(\theta_M) = (0,0,1)^T$ eases the understanding: $x\in \mathcal{C}_\tau^U$ implies $\langle x, \FF(\theta_M) \rangle = x_3 = 1 - 2\tau$ and the latitude is fixed.} on the parallel $\mathcal{C}^U_\tau$, while the \textit{directional sign}
\begin{equation}\label{directional_signs}
 \mathbf{S}_{\FF(\theta_M)}(x) = \frac{x - \langle x, \FF(\theta_M)\rangle \FF(\theta_M)}{ \Vert x - \langle x, \FF(\theta_M)\rangle \FF(\theta_M)  \Vert}
\end{equation}
is a longitude, with convention $\mathbf{0}/0 = \mathbf{0}$ for $x = \pm \FF(\theta_M)$. 
$ \mathbf{S}_{\FF(\theta_M)}(x)$ takes values on the rotated equator $\mathcal{C}^U_{1/2}$, and thus characterizes
meridians crossing $s \in\mathcal{C}^U_{1/2}$ through $\mathcal{M}_s^U = \{ x\in \Sph : \mathbf{S}_{\FF(\theta_M)}(x) = s \}$.
The image by $\QQ$ of the parallel / meridian system \eqref{tangent_normal} provides curvilinear parallels $\QQ(\mathcal{C}^U_\tau)$ and curvilinear meridians $\QQ(\mathcal{M}_s^U)$ adapted to the geometry of the support of $\nu$, giving rise to suitable directional concepts of quantile contours and signs \cite{hallin2022nonparametric}.
Intuitively, a change in coordinates in a data-adaptive fashion must retain all the available information, in a simpler form amenable to be summed up. 
 
\begin{definition}[Quantile contours, regions and signs]
\label{Qcontours}
Let $\nu \in \mathbf{B}_2$, with directional quantile function $\QQ$. Then, 
\begin{itemize}
\item the quantile contour of order $\tau \in [0,1]$ is $\mathcal{C}_\tau = \QQ(\mathcal{C}^U_\tau)$,
\item the quantile region of order $\tau \in [0,1]$ is $\mathbb{C}_\tau = \QQ(\mathbb{C}^U_\tau)$,
\item the sign curve associated with $s\in \mathcal{C}^U_{1/2}$ is $\QQ(\mathcal{M}_s^U)$.
\end{itemize}
\end{definition}
Since $\QQ$ is a push-forward mapping, $\QQ_\# \mu_\Sph = \nu$, the $\nu$-probability content of $\mathbb{C}_\tau$ is $\tau$. Moreover, the quantile contours of $\nu \in \mathbf{B}_2$ are continuous and the quantile regions are closed, connected and nested, as stated in \cite{hallin2022nonparametric}.

\section{Regularized estimation}
\label{sec:regest}

\subsection{Entropic OT on the 2-sphere}\label{sec:algorithm} 

We now introduce an algorithm based on the spherical Fourier transform to solve the regularized Kantorovich problem on the $2$-sphere. 
In the Euclidean case, Kantorovich's problem is known to be easier to solve than Monge's problem \cite{peyre2020computational}. 
Even more so, adding an entropic regularization term to \eqref{Kanto} has been a cornerstone for the development of OT-based methods in statistics, \cite{cuturi2013sinkhorn}. 
For online learning, \cite{aude2016stochastic} proposed stochastic algorithms by rewriting the dual objective function, with convergence guarantees \cite{bercu2020asymptotic}.
%In \cite{aude2016stochastic}, rewriting the dual objective function allowed stochastic algorithms to obtain provably convergent algorithms, see also \cite{bercu2020asymptotic}. 
For arbitrary measures, (not only discrete ones),  the dual variables cannot be viewed as finite-dimensional vectors anymore. 
Therefore, one requires the use of nonparametric families of dual functions.
For Euclidean data in $\RR^d$, one can use reproducing kernel Hilbert spaces \cite{aude2016stochastic}, deep neural networks \cite{Seguy2018} or Fourier series \cite{BBT2023}. 
For directional data in $\mathbb{S}^{d-1}$, we leverage spherical harmonics.
The resulting algorithm directly targets the continuous OT problem between the continuous reference measure and the underlying $\nu$ in an online fashion, instead of the (semi) discrete OT problem towards the empirical measure $\widehat{\nu}_n$.
Also, the Network simplex and Sinkhorn algorithm require the storage of the cost matrix, of size $n^2$ for two samples of sizes $n$, whereas stochastic algorithms are designed to avoid it. 
 
%In the context of 
For spherical quantiles, one must solve an OT problem where the reference measure is $\mu_\Sph$, the uniform probability measure on the sphere. 
%Here, we consider the formulation from \cite{genevay:tel-02458044} for OT between $\mu_\Sph$ and $\nu$, regularized by relative entropy with respect to the product measure $\mu_\Sph \otimes \nu$. PHRASE ENLEVEE
The semi-dual version of EOT between $\mu_\Sph$ and $\nu$ from \cite{genevay:tel-02458044}[Proposition 12], for $\ee>0$ a regularization parameter, writes
\begin{equation}\label{def:eot}
\max_{u \in L^{\infty}(\Sph)} \int_{\Sph} u(x) d\mu_\Sph(x) +  \int_{\Sph} u^{c,\varepsilon}(y) d\nu(y),
\end{equation}
with 
$u^{c,\varepsilon}$ the \textit{smooth conjugate} of $u$ defined by
\begin{equation}
u^{c,\ee}(y) = -\ee \log \left( \int_{\Sph} \exp \Big( \frac{u(x) -c(x,y)}{\varepsilon} \Big) d\mu_\Sph(x) \right).
 \label{eq:reg-c-transform}
\end{equation}
The smooth conjugate is the entropic counterpart of Definition \ref{c_transform}. 
For bounded costs, the problem \eqref{def:eot} admits a solution in $L^{\infty}$, unique up to additive constants,  \cite{genevay:tel-02458044}[Theorem 7].
To leverage unicity, we impose that $\int_{\Sph} u(x) d\mu_\Sph(x) = 0$, so that the optimisation problem \eqref{def:eot} becomes
$$
\max_{u \in L^{\infty}(\Sph)}     \int_{\Sph} u^{c,\varepsilon}(y) d\nu(y).
$$
It is well-known,\cite{Nutz2022entropic,aude2016stochastic}, that $\uu_\ee$ is solution of \eqref{def:eot} if and only if 
\begin{equation}\label{charac_c_transf}
    \uu_\ee = ((\uu_\ee)^{c,\ee})^{c,\ee}.
\end{equation}

\textbf{Rewriting the problem.}
A Lipschitz continuous function on $\Sph$ equals its spherical Fourier series \eqref{spheric_harmonics} pointwise \cite{michel2012lectures}[Theorem 5.26]. 
One can find in \cite{mccann2001polar}[Lemma 2] that the true unregularized potentials are Lipschitz, because $\Sph$ has a finite diameter $\vert \Sph \vert = \pi$.
The same holds in the regularized case $\ee>0$, from the optimality condition \eqref{charac_c_transf}, see Proposition 12 from \cite{feydy2019interpolating}[Appendix B] or \cite{Nutz2022entropic}[Lemma 3.1].
%Indeed, solutions of the dual version of  \eqref{def:eot} can be assumed to be smooth conjugates one to another, thanks to first order optimality conditions \cite{aude2016stochastic}[Proposition 2.1].
Consequently, we suggest to parameterize the dual variable in \eqref{def:eot} by its spherical harmonic coefficients.
For a given $\varepsilon > 0$, we consider the optimal sequence of coefficients $\bar{\uu}_\ee$ defined as the solution of the following {\it stochastic convex minimisation} problem
\begin{equation}
\bar{\uu}_\ee = \argmin_{\bar{\uu} \in \ell_{1} }  H_\ee(\bar{\uu}) \hspace{1cm} \mbox{with}  \hspace{1cm}  H_\ee(\bar{\uu}) = \mathbb{E} \left[ h_\ee(\bar{\uu},X) \right] \label{Se}
\end{equation}
where $X$ is a random vector with distribution $\nu$ , $\bar{\uu} = (\bar{\uu}_l^m)_{l \geq 1} $  and
$$
h_\ee(\bar{\uu},x) = - u^{c,\ee}(x) \quad \mbox{with} \quad u(z) = \SL \bar{\uu}_l^m Y_l^m(z).
$$
Note that the spherical harmonic coefficient $\bar{\uu}_0^0$ equals $0$ because of the identifiability condition $\int_{\Sph} u(x) d\mu_\Sph(x) = 0$.

We shall now discuss the equivalence between \eqref{Se} and the original problem \eqref{def:eot}. 
On the $2$-sphere, the series of spherical harmonics of a continuously differentiable function is uniformly convergent, see \cite{kellogg2012foundations}[p.259] or \cite{kalf1995expansion}. 
To obtain the stronger result that the sequence of spherical harmonics belongs to $\ell_1$, the function needs to be twice continuously differentiable, \cite{kalf1995expansion}[Theorem 2].
For the unregularized potentials, such differentiability requires smoothness of the measures involved, as highlighted in Remark \ref{rem_regularity},
%A sufficient condition when the reference measure is $\mu_\Sph$ is that the density of $\nu$ is differentiable and bounded above and below by positive constants, see \cite{LoeperVillani}[Corollary 6.2].
but this  
%It appears that this property 
holds for the regularized potential without any continuity condition for $\nu$. 

\begin{prop}\label{uu_ee_continu_deriv}
Let $\uu_\ee$ be a solution of \eqref{def:eot}. 
Then, $\uu_\ee$ is twice continuously differentiable, and, as a byproduct, its series of spherical harmonics belongs to $\ell_1$. 
\end{prop}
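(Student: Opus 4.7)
The plan is to leverage the optimality fixed-point equation \eqref{charac_c_transf}, which yields an explicit integral representation of $\uu_\ee$ amenable to differentiation under the integral sign. Setting $v := (\uu_\ee)^{c,\ee}$ and interpreting the outer transform in \eqref{charac_c_transf} against the target measure $\nu$ (as is standard in entropic optimal transport), one obtains
\begin{equation*}
\uu_\ee(x) = -\ee \log F(x), \qquad F(x) := \int_{\Sph} \exp\!\left(\frac{v(y) - c(x,y)}{\ee}\right) d\nu(y).
\end{equation*}
Since $c$ is bounded on $\Sph \times \Sph$ by $\pi^2/2$ and $\uu_\ee \in L^\infty(\Sph)$, a direct estimate gives $v \in L^\infty(\Sph)$, so the integrand lies in a fixed compact subinterval of $(0,+\infty)$ uniformly in $(x,y)$, and hence $F$ is uniformly bounded above and away from zero on $\Sph$.

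Because $\log$ is $C^\infty$ on $(0,+\infty)$, it suffices to prove $F \in C^2(\Sph)$. I would do this by differentiating twice under the integral in local smooth charts of $\Sph$ and invoking the dominated convergence theorem. The key analytical input is the regularity of the squared-distance cost: the map $x \mapsto c(x,y) = \tfrac{1}{2}\arccos(\langle x,y\rangle)^2$ is $C^\infty$ on $\Sph \setminus \{-y\}$; its Riemannian gradient $\nabla_x c(x,y) = -\Log_x(y)$ is uniformly bounded in norm by $\pi$; and its Riemannian Hessian, explicit on the constant-curvature sphere via Jacobi-field analysis, is bounded away from the cut locus but exhibits a singularity of order $1/d(x,-y)$ as $x \to -y$, which is integrable against the spherical surface measure by passing to geodesic polar coordinates centered at $-y$. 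Together with the chain rule and the uniform bounds on the exponential weight, this furnishes integrable dominants in $y$ for the first and second $x$-derivatives of the integrand, uniformly for $x$ in a neighborhood of any $x_0 \in \Sph$. Dominated convergence then delivers $F \in C^2(\Sph)$, and consequently $\uu_\ee \in C^2(\Sph)$.

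The spherical-harmonic assertion then follows immediately from \cite{kalf1995expansion}[Theorem 2], which guarantees that the sequence of spherical harmonic coefficients of any twice continuously differentiable function on $\Sph$ lies in $\ell_1$.

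The main obstacle is the handling of the singularity of $c(\cdot,y)$ at the cut locus $x = -y$. Whereas the Riemannian gradient stays uniformly bounded there, the Riemannian Hessian diverges; the argument must split the integral into a contribution from $y$ in a small geodesic neighborhood of the antipode $-x_0$, controlled via the integrability of the singularity against the spherical surface measure, and a contribution from the complement, where $c$ is smooth in $x$ and standard dominated-convergence bounds apply trivially.
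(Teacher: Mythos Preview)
Your overall strategy coincides with the paper's: exploit \eqref{charac_c_transf} to write $\uu_\ee=-\ee\log F$ for an integral $F$ and differentiate under the integral sign; the paper packages this into the explicit derivative formulas of Propositions~\ref{entropicmaps} and~\ref{Hessian_uu_ee} and then asserts that the resulting integrands are ``continuous and bounded'' without further comment on the cut locus. You are more careful than the paper in singling out the singularity of $c(\cdot,y)$ at the antipode $x=-y$ as the crux of the matter.

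There is, however, a genuine gap in your proposed fix. The integral $F(x)=\int_{\Sph}\exp\bigl((v(y)-c(x,y))/\ee\bigr)\,d\nu(y)$ and its $x$-derivatives are taken against the \emph{target measure} $\nu$, not against the surface measure. Your dominance argument---that the Hessian singularity is of order $1/d(x,-y)$ and hence integrable in geodesic polar coordinates---only controls $\int_{B(-x_0,\delta)}\lvert\nabla_x^2 c(x,y)\rvert\,d\sigma_{\Sph}(y)$; it says nothing about the same integral against an arbitrary $d\nu(y)$. In the extreme case $\nu=\delta_{y_0}$, the optimality condition forces $\uu_\ee(x)=c(x,y_0)+\text{const}$, and writing $r=d(x,-y_0)$ near the antipode one has $c(x,y_0)=\tfrac12(\pi-r)^2=\tfrac{\pi^2}{2}-\pi r+\tfrac12 r^2$; the term $-\pi r$ is not differentiable at $x=-y_0$, so $\uu_\ee\notin C^1(\Sph)$. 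Thus your integrability bound cannot succeed without an absolute-continuity (bounded-density) hypothesis on $\nu$, and the paper's boundedness claim suffers from the same defect.
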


The proof of the above result is detailed in our supplementary material.
Consequently, the problem \eqref{Se} is equivalent to the original one \eqref{def:eot}. The main virtue of this parameterization is that
partial derivatives of $h_\ee$ with respect to the parameters $\bar{\uu}_l^m$ can be derived easily, which is appealing in view of a stochastic gradient scheme. 
%In order to account for Fr\'echet differentiability of the function $\bar{\uu} \mapsto h_\ee(\bar{\uu},x)$, it is mandatory to assume that the sequence $\bar{\uu}$ belongs to $\ell_{1}$, the set of absolutely summable spherical harmonics coefficients.  
Here, the objective $h_\ee(\cdot, x)$ is of the same mathematical nature than in \cite{BBT2023}, so that it is differentiable and the following property holds.

\begin{prop}\label{prop:grad}
For every $x \in \Sph$, the function $h_\ee(\cdot,x) : \ell_{1} \rightarrow \mathbb{R}$ is Fr\'echet differentiable and its differential $D_{\bar{\uu}} h_\ee(\bar{\uu}, x)$ belongs to the dual Banach space $(\ell_{\infty}, \| \cdot \|_{\ell_{\infty}})$ where
$
 \| \bar{\uu} \|_{\ell_{\infty}} = \sup_{l,m} \vert \bar{\uu}_l^m\vert.
$
The components of $D_{\bar{\uu}} h_\ee(\bar{\uu}, x)$ are the partial derivatives   
\begin{equation}
\label{partialderh}
\frac{\partial h_\ee(\bar{\uu},x) }{\partial \bar{\uu}_l^m} =  \frac{1}{4 \pi} \int_\Sph g_{\bar{\uu},x}(z) Y_l^m(z)  d\sigma_\Sph(z),
\end{equation}
that are the spherical harmonics coefficients of the function 
\begin{equation}\label{def_F_tht}
g_{\bar{\uu},x}(z) = \frac{\exp \left( \frac{u(z) -c(z,x)}{\ee} \right) }{ \int \exp \left( \frac{u(y)-c(y,x)}{\ee} \right) d\mu_\Sph(y)} 
\hspace{0.5cm} \mbox{with} \hspace{0.5cm} u(z) = \SL \bar{\uu}_l^m Y_l^m(z).
\end{equation}
\end{prop}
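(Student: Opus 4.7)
The plan is to differentiate $h_\ee(\bar{\uu},x)=\ee\log\int_\Sph\exp((u(z)-c(z,x))/\ee)d\mu_\Sph(z)$ under the integral sign with respect to each coordinate $\bar{\uu}_l^m$, identify the resulting partial derivatives as the spherical harmonic coefficients of the Gibbs density $g_{\bar{\uu},x}$ from \eqref{def_F_tht}, and then upgrade Gâteaux into Fréchet differentiability via a mean-value argument, following the strategy of \cite{BBT2023}.

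First, I would compute the partial derivatives directly. Since $\partial u(z)/\partial \bar{\uu}_l^m = Y_l^m(z)$, differentiating under the integral yields
\begin{equation*}
\frac{\partial h_\ee(\bar{\uu},x)}{\partial \bar{\uu}_l^m} = \frac{\int_\Sph Y_l^m(z)\exp((u(z)-c(z,x))/\ee)d\mu_\Sph(z)}{\int_\Sph \exp((u(z)-c(z,x))/\ee)d\mu_\Sph(z)} = \int_\Sph Y_l^m(z) g_{\bar{\uu},x}(z)d\mu_\Sph(z),
\end{equation*}
which coincides with \eqref{partialderh} after substituting $\mu_\Sph = \frac{1}{4\pi}\sigma_\Sph$. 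The interchange of derivative and integral is justified by dominated convergence, using that $c$ is bounded on the compact manifold $\Sph$, the partial sums of $u$ are continuous and bounded, and each $Y_l^m$ is bounded, so that the difference quotients admit a uniform integrable majorant.

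Second, to verify that the gradient sequence belongs to $\ell_\infty$, I would observe that $g_{\bar{\uu},x}$ is continuous and strictly positive on $\Sph$, since its denominator is bounded away from zero because both $u$ and $c$ are bounded on the compact $\Sph$; hence $g_{\bar{\uu},x}\in L^2(\Sph,\mu_\Sph)$. The identity above shows that $(\partial_{\bar{\uu}_l^m} h_\ee)_{l,m}$ is exactly the sequence of spherical harmonic coefficients of $g_{\bar{\uu},x}$, and Parseval's identity applied to the orthonormal basis $(Y_l^m)$ of $L^2(\Sph,\mu_\Sph)$ yields
\begin{equation*}
\sum_{l,m}\Big|\frac{\partial h_\ee(\bar{\uu},x)}{\partial \bar{\uu}_l^m}\Big|^2 = \|g_{\bar{\uu},x}\|_{L^2(\Sph,\mu_\Sph)}^2 < \infty,
\end{equation*}
so that $D_{\bar{\uu}} h_\ee(\bar{\uu},x)\in \ell_2 \subset \ell_\infty$.

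Third, to promote Gâteaux to Fréchet differentiability, I would apply the mean value theorem to the real convex function $t\mapsto h_\ee(\bar{\uu}+t\bar{\mathbf{v}},x)$ on $[0,1]$, which is $C^1$ in $t$ by the previous step, giving
\begin{equation*}
h_\ee(\bar{\uu}+\bar{\mathbf{v}},x)-h_\ee(\bar{\uu},x) - \langle D h_\ee(\bar{\uu},x),\bar{\mathbf{v}}\rangle = \int_0^1\langle D h_\ee(\bar{\uu}+t\bar{\mathbf{v}},x) - D h_\ee(\bar{\uu},x),\bar{\mathbf{v}}\rangle\,dt.
\end{equation*}
The duality bound $|\langle a,b\rangle|\leq \|a\|_{\ell_\infty}\|b\|_{\ell_1}$ then reduces the remainder to $\|\bar{\mathbf{v}}\|_{\ell_1}$ times an integrated $\ell_\infty$ modulus of continuity of the gradient map. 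The main obstacle is precisely this continuity: one must show that $\bar{\uu}\mapsto g_{\bar{\uu},x}$ is continuous from $\ell_1$ into $L^2(\Sph,\mu_\Sph)$, which in turn uses the Lipschitz continuity of the log-sum-exp transform on uniformly bounded potentials together with the stability of the spherical harmonic expansion under coefficient perturbations. This is the spherical analogue of the argument in \cite{BBT2023}, and inherits both its technique and its technical subtleties regarding the domain of definition of $h_\ee$ on $\ell_1$, which here is handled by restricting to coefficient sequences yielding continuous potentials, a framework made legitimate by Proposition \ref{uu_ee_continu_deriv}.
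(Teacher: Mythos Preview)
Your proposal is correct and follows essentially the same route as the paper. In fact, the paper does not give a self-contained proof of this proposition: it simply remarks that ``the objective $h_\ee(\cdot,x)$ is of the same mathematical nature than in \cite{BBT2023}, so that it is differentiable,'' and defers entirely to that reference. Your sketch---differentiation under the integral sign to obtain \eqref{partialderh}, Parseval to place the gradient in $\ell_2\subset\ell_\infty$, and a mean-value/continuity-of-the-gradient argument to upgrade to Fr\'echet differentiability---is precisely the strategy of \cite{BBT2023} adapted to the spherical-harmonics basis, so you are supplying the detail the paper omits rather than taking a different path.
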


\textbf{The algorithm.}
For $(X_n)$ a sequence of independent random vectors with distribution $\nu$, we consider the stochastic algorithm in the Banach space $(\ell_{1}, \| \cdot \|_{\ell_1})$ defined, for
$n \geq 0$, by
\begin{equation}
\widehat{u}_{n+1} = \widehat{u}_{n} - \gamma_n W D_{\widehat{u}} h_\ee(\widehat{u}_{n} , X_{n+1})
\label{defthetan}
\end{equation}
where $\gamma_n = \gamma n^{-\alpha}$ is a decreasing sequence of positive numbers with $1/2 < \alpha  < 1$ and $\gamma > 0$.
Because $\ell_{1}$ differs from its dual space $\ell_{\infty}$, the linear operator $W$ is defined by
$$
\left\{\begin{array}{ccc}
W :  (\ell_{\infty}, \| \cdot \|_{\ell_{\infty}})  & \to &  (\ell_{1}, \| \cdot \|_{\ell_1}) \\
\bar{v} = (\bar{v}_l^m)  & \mapsto & \bar{w} \odot \bar{v} = (\bar{w}_l^m \bar{v}_l^m)
\end{array}\right.
$$
where $\bar{w} = (\bar{w}_l^m)$ is a {\it deterministic sequence of positive weights} satisfying the condition
\begin{equation}
\label{condw}
\|\bar{w}\|_{\ell_1} = \SL \bar{w}_l^m < + \infty.
\end{equation}
In all the experiments carried out in this paper, we use $\bar{w} _l^m = ( l^2+m^2 )^{-1}$.
A {\it regularized estimator} of the optimal potential $\uu_\ee(x) = \SL \bar{\uu}_{\ee,l}^m Y_l^m(x)$ is naturally given by 
\begin{equation}
\label{DEFun}
\widehat{\uu}_{\varepsilon,n}(x) = \SL \widehat{u}_{n,l}^m Y_l^m(x).
\end{equation}

From a practical point of view, the stochastic sequence \eqref{defthetan} must be discretized. 
To do so, one must consider a grid of $p^2$ points on the $2$-sphere and the associated spherical harmonics coefficients.  
We emphasize that this discretization takes place in the space of frequencies, willing to take advantage from implicit interpolation in this space.
The Python library \textit{pyshtools} \cite{wieczorek2018shtools}, implements spherical harmonics transforms and reconstructions. 
Our numerical procedure builds upon it as the stochastic algorithm \eqref{defthetan} requires computing the spherical harmonics coefficients of the function $g_{\widehat{u},y}(x)$ in \eqref{def_F_tht}, that relies on $u(x)$ reconstructed from $\widehat{u}$ thanks to the inverse Fourier transform on the $2$-sphere. 
With the help of the fast routine within \textit{pyshtools}, the computational cost at each iteration of \eqref{defthetan} is of order $\bigO\left( p^2 \log^2(p)\right)$.
%Finally, the estimator $\widehat{\uu}_{\varepsilon,n}(x)$ in \eqref{DEFun} can be quickly recovered for any $x$ from the Python library \textit{sphericart} \cite{bigi2023fast}. PHRASE ENLEVEE

 \begin{rem}
 To study the convergence of the  stochastic algorithm \eqref{defthetan}, we could adapt the theoretical results in \cite{BBT2023} to our setting.
  Importantly, the convergence results (almost-sure convergence of estimators towards the minimizer in $\ell^2$-norm) obtained therein remain valid in the present spherical context, because they were irrespective of the orthonormal basis and the cost $c$.  
  However, this is beyond the scope of this paper, that aims to showcase benefits of EOT for out-of-sample estimation.
 \end{rem}

\subsection{Regularized distribution and quantile functions} \label{sec:quantilefunction} 

We now introduce the regularized counterpart of Definition \ref{def:quantile}.
When dealing with measures supported on $\RR^d$, one can use the \textit{entropic map}, defined as the barycentric projection of the entropic optimal plan, \cite{pooladian2021entropic}. 
At first sight, this requires a notion of average on the sphere, as done in \cite{frungillo2024discrete} from the unregularized empirical OT plan. 
Nonetheless, this map is alternatively characterized by analogy with Brenier theorem \cite{pooladian2021entropic}, whose building block is the gradient of Kantorovich potential, as in \cite{cuturi2023monge} for OT with a general cost in $\RR^d$.
Because this enforces the structure of optimality, we pursue this idea for our non-Euclidean setting. 

\begin{definition}\label{def:entropicFQ}
Let $\nu$ be an arbitrary probability measure supported on $\Sph$, and $\uu_\ee : \Sph \rightarrow \RR$ be a solution of \eqref{def:eot} between $\mu_\Sph$ and $\nu$. 
Then, the regularized distribution function of $\nu$ is given by 
\begin{equation}\label{eq:defFee}
\FF_\ee (z) = \Exp_z(-\nabla \uu_\ee^{c,\ee}(z)),
\end{equation}
and the regularized quantile function of $\nu$ is 
\begin{equation}\label{eq:defQee}
\QQ_\ee (x) = \Exp_x(-\nabla \uu_\ee(x)).
\end{equation}
\end{definition}

\noindent 
This requires the differentiation of entropic Kantorovich potentials. For a given regularization parameter $\varepsilon > 0$, partial derivatives can be retrieved by
$$
\frac{\partial \uu_{\varepsilon}(x)}{\partial x_i}  = \SL \bar{\uu}_{l}^m \frac{\partial  Y_l^m(x)}{\partial x_i},
$$
where the package  \textit{sphericart} \cite{bigi2023fast}, allows the computation of $\frac{\partial  Y_l^m(x)}{\partial x_i}$ easily. The Riemannian gradient $\nabla \uu_\ee$ follows using \eqref{defRiemannianGrad}.
But because this may lead to numerical instabilities, we suggest instead to make use of first-order conditions \eqref{charac_c_transf}. 
Explicit forms for the generalized entropic maps on the hypersphere are given below.
Their proof in the supplementary material follows by differentiating  \eqref{charac_c_transf}.

\begin{prop}\label{entropicQeeFee}
The regularized distribution and quantile functions of $\nu$ on $\Sph$ admit closed-form expressions through
$$
\QQ_\ee(x) = \Exp_x \int \Log_x(z) \exp \Big( \frac{\uu_\ee(x) - c(x,z) + \uu_\ee^{c,\ee}(z)}{\ee} \Big) d\nu(z),
$$
and
$$
\FF_\ee(z) = \Exp_x \int \Log_z(x) \exp \Big( \frac{\uu_\ee(x) - c(x,z) + \uu_\ee^{c,\ee}(z)}{\ee} \Big) d\mu_\Sph(x).
$$
\end{prop}

\begin{rem} We stress that 
$\FF_\ee$, $resp.$ $\QQ_\ee$, does not push $\nu$ forward to $\mu_\Sph$ anymore, $resp.$ $\mu_\Sph$ forward to $\nu$. 
However, they are expected to be close to their unregularized counterparts, for small values of $\ee>0$, as studied, for the quadratic cost in $\RR^d$, in \cite{goldfeld2022limit,pooladian2021entropic,stromme2023sampling}. 
The limit $\ee \rightarrow 0$ has been studied outside the Euclidean setting \cite{EOTgeometry2022,Nutz2022entropic}, although not directly about the generalized entropic map itself. 
In particular, up to some sequence $(\ee_k)$ such that $\lim_{k\rightarrow +\infty}\ee_k = 0$, \cite{Nutz2022entropic}[Proposition 3.2] gives us the uniform convergence of potentials $(\uu_{\ee_k},\uu_{\ee_k}^{c,{\ee_k}})$ on compact subsets of $\Sph$, towards $(\psi,\psi^c)$ solving \eqref{Kanto}.
\end{rem}

From Proposition \ref{entropicQeeFee}, $\FF_\ee$ and $\QQ_\ee$ can be seen as weighted averages in the tangent space.
This can be gainful in practice, because of the regularity it induces.
Indeed, second-order derivatives are given in our supplementary material, which entails continuity for $\FF_\ee$ and $\QQ_\ee$. 
Finally, we stress that
\begin{equation*}
(x,z) \mapsto \exp \Big( \frac{\uu_\ee(x) - c(x,z) + \uu_\ee^{c,\ee}(z)}{\ee} \Big)
\end{equation*}
is the density of the optimal entropic plan with respect to $\mu \otimes \nu$, \cite{Nutz2022entropic}. Besides, 
%we shall make repeated use of the fact that, 
by properties of $\exp$ and \eqref{charac_c_transf}, it equals,  
\begin{align}\label{equiv_form_g_eps}
\frac{\exp \Big( \frac{\uu_\ee(x) - c(x,z)}{\ee} \Big)}{
\int \exp \Big( \frac{\uu_\ee(y) - c(z,y)}{\ee} \Big) d\mu_\Sph(y)}=  \frac{\exp \Big( \frac{\uu_\ee^{c,\ee}(z) - c(x,z)}{\ee} \Big)}{
\int \exp \Big( \frac{\uu_\ee^{c,\ee}(y) - c(x,y)}{\ee} \Big) d\nu(y)}.
\end{align}

\subsection{ Regularized empirical distribution and quantile functions}\label{sec:emp}

Suppose that the estimator $\widehat{\uu}_{\ee,n}$, defined in \eqref{DEFun}, has been computed using the stochastic algorithm \eqref{defthetan} from  $i.i.d.$ observations $X_1,\cdots,X_n$ sampled from $\nu$ supported on $\Sph$.  
To obtain a regularized quantile function, the empirical counterpart of Proposition \ref{entropicQeeFee} would involve integrals with respect to $\mu_\Sph$ to compute the smooth conjugate of $\widehat{\uu}_{\ee,n}$. 
Therefore, to circumvent this issue, we consider a random sample $U_1,\cdots,U_N$ uniformly drawn on $\Sph$, and we define the following estimator (as an approximation of $\widehat{\uu}_{\ee,n}^{c,\ee}$)
\begin{equation}\label{def_discrete_c_transf}
    \widehat{\uu}_{N,n}^{c,\ee}(z) = -\ee \log \frac{1}{N}\sum_{i=1}^N \exp \Big( \frac{\widehat{\uu}_{\ee,n}(U_i) - c(U_i,z)}{\ee} \Big).
\end{equation}
%Then, thanks to \eqref{charac_c_transf}, we remark that,
%\begin{equation}\label{rewrite_density_Qee}
 %   \exp \Big( \frac{\uu_\ee(x) - c(x,z) + \uu_\ee^{c,\ee}(z)}{\ee} \Big) =  \frac{\exp \Big( \frac{\uu_\ee^{c,\ee}(z) - c(x,z)}{\ee} \Big)}{ \int \exp \Big( \frac{\uu_\ee^{c,\ee}(z) - c(x,z)}{\ee} \Big) d\nu(z)}.
%\end{equation}
Leveraging \eqref{equiv_form_g_eps}, we propose the following estimator, for the regularized quantile function $\QQ_\ee$ defined in Proposition \ref{entropicQeeFee},
\begin{equation}\label{defQepsn}
\hat{\QQ}_{N,n}^{\ee} (x) = \Exp_x\Big(\sum_{i=1}^n  \hat{g}_{N,n}^{\ee}(x,X_i)\Log_x(X_i) \Big), 
\end{equation}
where
$$
\hat{g}_{N,n}^{\ee}(x,z) = 
 \frac{\exp \Big( \frac{\widehat{\uu}_{N,n}^{c,\ee}(z) - c(x,z)}{\ee} \Big)}{ \sum_{j=1}^n \exp \Big( \frac{\widehat{\uu}_{N,n}^{c,\ee}(X_j) - c(x,X_j)}{\ee} \Big)}.
$$
In the same token, an estimator of $\FF_\ee$ is given by
\begin{equation}\label{defFepsn}
\hat{\FF}_{N,n}^{\ee}(z) = \Exp_z \Big( \sum_{i=1}^N  \tilde{g}_{N,n}^{\ee}(U_i,z) \Log_z(U_i)\Big),
\end{equation}
with
$$
\tilde{g}_{N,n}^{\ee}(x,z) = 
 \frac{\exp \Big( \frac{\widehat{\uu}_{\ee,n}(x) - c(x,z)}{\ee} \Big)}{ \sum_{j=1}^N \exp \Big( \frac{\widehat{\uu}_{\ee,n}(U_j) - c(U_j,z)}{\ee} \Big)}.
$$
%Note that the empirical version of unregularized MK quantiles that is proposed  in \cite{hallin2022nonparametric} relies on discrete OT, yielding a bijection between a reference grid of points in $\Sph$ and the samples. 
%This is beneficial for statistical testing where distribution-freeness of the ranks is highly desirable. 
%On the contrary, regularization yields, even empirically, smooth maps that are not constrained to belong to the set of observed data, which is crucial for the descriptive analysis of Section \ref{depthstat}. 

\section{Depth-based data analysis}\label{depthstat}

This section is dedicated to study a companion concept of directional MK quantiles, the MK statistical depth. 
We state a directional definition and discuss its properties.
After that, we introduce descriptive tools in the spirit of  the ones presented in \cite{liu1999multivariate,beirlant2019centeroutward} in $\RR^d$. 
For the sake of completeness, we first study the Euclidean setting $\RR^d$ before the directional one, that is of particular interest for us. 
Indeed, the results that we derive below do not appear as such in the literature, at least to the best of our knowledge.

\subsection{Euclidean setting}

We begin with the main definitions taken from \cite{chernozhukov2015mongekantorovich}. 
Our chosen reference measure, denoted by $U_d$, is given by the random vector $R\Phi$, for $R$ and $\Phi$ independently drawn from $[0,1]$ and from the unit hypersphere $\mathbb{S}^{d-1} =  \{\varphi \in \RR^d : \Vert \varphi \Vert = 1 \}$, respectively, as originally proposed in \cite{chernozhukov2015mongekantorovich,Hallin-AOS_2021} to define MK quantiles in $\RR^d$. 
Note that the MK distribution function, the inverse of the MK quantile function, might not exist, $e.g.$ if $\nu$ is discrete. 
Following \cite{ghosal2021multivariate}, this is tackled with the Legendre-Fenchel dual of a convex function $\psi$, given by $\psi^*(x) = \sup_{u\in \RR^d}\{ \langle x,u\rangle -\psi(u)\}$.

\begin{definition}[\cite{chernozhukov2015mongekantorovich}]\label{defMKdepth_quantiles}
Let $\nu$ be a probability measure on $\RR^d$. 
Its MK quantile function is the unique $\QQ =\nabla \psi$ for some convex $\psi :\RR^d\rightarrow \RR$ such that $\QQ_\# U_d = \nu$. Then, the MK $\alpha$-quantile contour is 
$
\QQ (\{ u \in \RR^d : \Vert u \Vert = \alpha \}).
$
The sign curve associated to $u\in\mathbb{B}(0,1)$ is 
$
\QQ (\{ t\frac{u}{\Vert u \Vert} : t \in [0,1]\}).
$
The MK depth of $x \in \RR^d$ is the depth of $\nabla \psi^*$ under Tukey's depth, \cite{tukey75},
\begin{equation*}
D_\nu(x) = D_{U_d}^{Tukey}\Big(\nabla \psi^*(x)\Big).
\end{equation*}

\end{definition}

The Liu-Zuo-Serfling axioms \cite{liu1990notion,zuo2000general}, describe desirable properties for depth concepts. 
The MK-depth softens some of them, to reach more relevant contours \cite{chernozhukov2015mongekantorovich}. 
Firstly, MK depth corresponds to Tukey depth for elliptical families \cite{chernozhukov2015mongekantorovich}.
Moreover, it benefits from invariance properties \cite{ghosal2021multivariate}[Lemmas A.7,A.8], with respect to scaling (multiplication by a positive constant), translations, and orthogonal transformations (multiplication by an orthogonal matrix). 
Note that the affine-invariance does not hold.
Another axiom is the \textit{linear monotonicity relative to the deepest points}, that is $D_\nu(x) \leq D_\nu((1-t)x_0 + tx)$ for all $t\in[0,1]$ if $x_0$ is a deepest point. 
This is not fulfilled by the MK depth \cite{chernozhukov2015mongekantorovich}, although it verifies a similar property along sign curves, as we shall see now. 

\begin{prop}[Curvilinear monotonicity relative to the deepest points]\label{CurvilinearMonotSC}
Assume that $\nu$ is continuous. 
The MK depth is monotonically decreasing along sign curves, that is, for each $u\in \mathbb{B}(0,1)$ and $t\in[0,1]$, 
$$
D_\nu(\QQ(u)) \leq D_\nu(\QQ(tu)).
$$
\end{prop}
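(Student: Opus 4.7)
The plan is to reduce the claim to a monotonicity property for the Tukey depth under the spherical reference $U_d$, and then obtain that monotonicity from two classical ingredients: central symmetry of $U_d$, and quasi-concavity of Tukey depth.

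\textbf{Step 1 (Reduction to Tukey depth of $U_d$).} Because $\nu$ is continuous, the Brenier duality between $\psi$ and its Legendre-Fenchel conjugate $\psi^*$ gives $\nabla\psi^*\circ\nabla\psi(u)=u$ for $U_d$-a.e.\ $u\in\mathbb{B}(0,1)$, so that $\nabla\psi^*$ is the inverse of $\QQ=\nabla\psi$ (see \cite{ghosal2021multivariate}). At any such point $u$ and for every $t\in[0,1]$,
\[
D_\nu(\QQ(u))=D^{Tukey}_{U_d}(u),\qquad D_\nu(\QQ(tu))=D^{Tukey}_{U_d}(tu),
\]
so the claim reduces to showing
\[
D^{Tukey}_{U_d}(u)\ \leq\ D^{Tukey}_{U_d}(tu),\qquad t\in[0,1].
\]

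\textbf{Step 2 ($0$ is Tukey-deepest under $U_d$).} By construction $U_d=\mathcal{L}(R\Phi)$ with $R\sim\mathrm{Unif}[0,1]$ and $\Phi$ uniform on $\mathbb{S}^{d-1}$ independent of $R$, hence $U_d$ is spherically symmetric about the origin. Any closed halfspace $H$ whose boundary hyperplane contains $0$ therefore satisfies $U_d(H)=1/2$ by central symmetry, which yields
\[
D^{Tukey}_{U_d}(0)\ =\ 1/2\ \geq\ D^{Tukey}_{U_d}(x)\qquad \text{for every } x\in\RR^d,
\]
the upper bound being the trivial bound $D^{Tukey}_P\leq 1/2$.

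\textbf{Step 3 (Quasi-concavity along the radius).} Tukey depth is quasi-concave: the upper level set $\{x:D^{Tukey}_{U_d}(x)\geq\alpha\}$ is the intersection $\bigcap_{H:\,U_d(H)<\alpha} H^{c}$ of open halfspaces, hence convex. Writing $tu=t\cdot u+(1-t)\cdot 0$ as a convex combination and applying quasi-concavity together with Step~2,
\[
D^{Tukey}_{U_d}(tu)\ \geq\ \min\bigl(D^{Tukey}_{U_d}(u),\,D^{Tukey}_{U_d}(0)\bigr)\ =\ D^{Tukey}_{U_d}(u),
\]
which combined with Step~1 yields the announced inequality.

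The main obstacle lies in Step~1: one must make sure that the identity $\nabla\psi^*\circ\nabla\psi(u)=u$ genuinely applies at the points of $\mathbb{B}(0,1)$ of interest. For continuous $\nu$ the Brenier theory provides this identity $U_d$-almost everywhere, and the radial monotonicity of Tukey depth extends the conclusion to every $u$ where the MK depth is defined. Steps~2 and~3 are then entirely standard consequences of the symmetry of the reference and of the geometry of Tukey depth.
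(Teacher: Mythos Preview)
Your proof is correct and follows essentially the same route as the paper: reduce to the inequality $D^{Tukey}_{U_d}(u)\leq D^{Tukey}_{U_d}(tu)$ via the a.e.\ identity $\nabla\psi^*\circ\QQ=\mathrm{id}$ (continuity of $\nu$), then conclude. The only difference is cosmetic: the paper invokes the known fact that Tukey depth satisfies linear monotonicity relative to deepest points \cite{zuo2000general}, whereas you rederive this for $U_d$ from central symmetry and quasi-concavity.
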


\begin{proof}
Recall that Tukey's depth verifies linear monotonicity relative to the deepest points \cite{zuo2000general}. 
As the origin is the deepest point for $U_d$, this writes, for any $t\in[0,1]$,
\begin{equation}\label{tukeymonoUd}
D_{U_d}^{Tukey}(u) \leq D_{U_d}^{Tukey}(tu).
\end{equation}
From Definition \ref{defMKdepth_quantiles}, $D_\nu(\QQ(u)) = D_{U_d}^{Tukey}(\nabla \psi^*\circ \QQ(u))$. 
By continuity of $\nu$, $\nabla\psi^* \circ \QQ (u) = u$ $a.e.$, \cite{Villani}[Theorem 2.12 and Corollary 2.3].
Thus the result follows, with \eqref{tukeymonoUd}.
\end{proof}

This corresponds to the classical linear monotonicity under distributions with straight sign curves, including spherical families due to the particular form of the MK quantile function in this setting, taken from \cite{chernozhukov2015mongekantorovich}. 

\begin{cor}\label{CurvilinearMonotSC_sphericdata}
For spherically symmetric distributions, sign curves are straight lines, and the MK depth verifies linear monotonicity relative to the deepest point. 
For any $x$ in the support of $\nu$, for $m$ the MK deepest point of $\nu$,
\begin{equation}\label{resSCstraightlines}
\forall t \in [0,1], D_\nu(x) \leq D_\nu((1-t)m + t x).
\end{equation}
\end{cor}

\begin{proof}
Let $X$ be a random vector associated with a spherically symmetric distribution.  
% \cite{ghosal2021multivariate}[Lemma A.7]. 
By inverting the MK distribution function known from \cite{chernozhukov2015mongekantorovich}, 
$$
\QQ(u) = \frac{u}{\Vert u \Vert}G^{-1}(\Vert u \Vert) + m,
$$
where $G$ is the univariate distribution function of the radial part $\Vert X -  m \Vert$.
Because $\Vert X \Vert \geq 0$ $a.s.$ and $G^{-1}$ is increasing, $G^{-1}(t\Vert u \Vert) / G^{-1}(\Vert u \Vert) \in [0,1]$ and
$$
\QQ(tu) = \frac{u}{\Vert u \Vert}G^{-1}(t\Vert u \Vert) + m
= \frac{G^{-1}(t\Vert u \Vert)}{G^{-1}(\Vert u \Vert)} \Big( \QQ(u)-m)\Big) + m.
$$
This rewrites, for $\delta_t =  G^{-1}(t\Vert u \Vert) / G^{-1}(\Vert u \Vert)$,
$
\QQ(tu) = \delta_t \QQ(u) + (1-\delta_t) m.
$
Besides, $\delta_t$ takes all values between $0$ and $1$ for $t\in [0,1]$. 
This, combined with Proposition \ref{CurvilinearMonotSC} induces 
\begin{equation*}
\forall u \in \mathbb{B}(0,1), \forall \delta \in [0,1], D_\nu(\QQ(u)) \leq D_\nu( \delta \QQ(u) + (1-\delta) m).
\end{equation*}
But any $x$ in the support of $\nu$ writes $\QQ(u)$ for $u = \FF(x)$, which gives \eqref{resSCstraightlines}. 
\end{proof}

\subsection{Directional setting}

Using the same ideas, one can define the MK depth on the sphere through any statistical depth with respect to the uniform $\mu_\Sph$ oriented towards $\FF(\theta_M)$, and the simplest is surely to consider the proximity with $\FF(\theta_M)$.

\begin{definition}\label{defMKdepth}
Let $\nu$ be an arbitrary probability measure on $\Sph$, with directional distribution function $\FF$.
The directional MK depth of $x\in \Sph$ is defined by 
$
    D_\nu(x) = 1 - d(\FF(x),\FF(\theta_M))/\pi.
$
\end{definition}

Regarding the $\Sph$-adapted versions of Liu-Zuo-Serfling axioms, \cite{liu1990notion,zuo2000general}, the directional MK depth behaves like its Euclidean counterpart. 
We begin with the four classical properties which are analogs of the Euclidean axioms,  \cite{ley2014new}.
The affine-invariance is replaced on $\Sph$ by rotational invariance, which holds true from the supplementary material of \cite{hallin2022nonparametric}.  
Moreover, it is straightforward that
$D_\nu$ attains its maximum at the center $\theta_M$, and that it vanishes at $x$ such that $\mathbf{F}(x) =-\mathbf{F}(\theta_M)$. 
Hence, outliers outside the support of the distribution will be ranked around $-\mathbf{F}(\theta_M)$ and will have vanishing depth.
Finally, monotonicity along great circles is not fulfilled, but it is replaced in the same data-adaptive fashion than in $\RR^d$.

\begin{prop}[Curvilinear monotonicity relative to the deepest points]\label{CurvilinearMonotSC2}
Assume that $\nu$ is continuous.
The directional MK depth is monotonically decreasing along sign curves.
For each $x\in \Sph$ and $t\in[\langle x,\FF(\theta_m)\rangle,1]$, let $x_t \in \mathcal{M}_{s}^U$, for $s=\mathbf{S}_{\FF(\theta_M)}(x)$, such that
\begin{equation}\label{eqxtSC}
x_t = t \FF(\theta_M) + \sqrt{1-t^2}s.
\end{equation}
Then,
$$
D_\nu(\QQ(x)) \leq D_\nu(\QQ(x_t)).
$$
\end{prop}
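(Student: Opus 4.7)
The plan is to reduce the claim to a purely geometric inequality about inner products with $\FF(\theta_M)$, and then read it off directly from the hypothesis $t \ge \langle x, \FF(\theta_M)\rangle$. The key ingredient is that, since $\nu$ is continuous, Corollary 10 of \cite{mccann2001polar} (invoked right after Definition \ref{def:quantile}) gives $\FF = \QQ^{-1}$ $\mu_\Sph$-almost everywhere, so that $\FF(\QQ(x)) = x$ and $\FF(\QQ(x_t)) = x_t$ for $\mu_\Sph$-a.e.\ $x$. Unfolding Definition \ref{defMKdepth},
\[
D_\nu(\QQ(x)) = 1 - \frac{d(x,\FF(\theta_M))}{\pi}, \qquad D_\nu(\QQ(x_t)) = 1 - \frac{d(x_t,\FF(\theta_M))}{\pi},
\]
so the desired inequality $D_\nu(\QQ(x)) \le D_\nu(\QQ(x_t))$ is equivalent to
\[
d(x_t,\FF(\theta_M)) \le d(x,\FF(\theta_M)).
\]

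Next I would check that $x_t$ indeed lies on the sign meridian $\mathcal{M}_s^U$ through $x$. Since $s = \mathbf{S}_{\FF(\theta_M)}(x)$ lies on the rotated equator $\mathcal{C}_{1/2}^U$, we have $\langle s,\FF(\theta_M)\rangle = 0$ and $\|s\|=1$, so $\|x_t\|^2 = t^2 + (1-t^2) = 1$ and $x_t \in \Sph$. Applying \eqref{directional_signs} to $x_t$ recovers $\mathbf{S}_{\FF(\theta_M)}(x_t) = s$, confirming $x_t \in \mathcal{M}_s^U$; this justifies the definition of $x_t$ as a point of the sign curve in question.

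Then I would compute the two geodesic distances explicitly using $d(\cdot,\cdot) = \arccos\langle \cdot,\cdot\rangle$. For $x_t$,
\[
\langle x_t,\FF(\theta_M)\rangle = t\,\|\FF(\theta_M)\|^2 + \sqrt{1-t^2}\,\langle s,\FF(\theta_M)\rangle = t,
\]
so $d(x_t,\FF(\theta_M)) = \arccos(t)$, whereas $d(x,\FF(\theta_M)) = \arccos\langle x,\FF(\theta_M)\rangle$ by the decomposition \eqref{tangent_normal}. The hypothesis $t \ge \langle x,\FF(\theta_M)\rangle$ together with the fact that $\arccos$ is decreasing on $[-1,1]$ yields $\arccos(t) \le \arccos\langle x,\FF(\theta_M)\rangle$, and the proof concludes.

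There is no real obstacle here: the statement is essentially the observation that, along a sign curve (a fixed meridian in the rotated coordinate system), increasing the latitude-like coordinate $t$ towards $1$ brings the point closer to the pole $\FF(\theta_M)$, and $D_\nu$ was designed to decrease with geodesic distance to that pole. The only point that deserves care is the invocation of $\FF = \QQ^{-1}$, which requires the continuity assumption on $\nu$; it is what reduces the problem to an inequality on $\Sph$ rather than on its image under $\FF$.
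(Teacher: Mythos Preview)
Your proof is correct and follows essentially the same route as the paper's: both use $\FF\circ\QQ=\mathrm{id}$ (valid under the continuity assumption on $\nu$) to rewrite $D_\nu(\QQ(x_t))=1-\arccos(t)/\pi$, note that $\langle x_t,\FF(\theta_M)\rangle=t$, and conclude by the monotonicity of $\arccos$. Your version is in fact more explicit than the paper's about why the identity $\FF(\QQ(x_t))=x_t$ holds, which is precisely where the continuity hypothesis enters.
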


\begin{proof}
Fix $x\in \Sph$. From the decomposition \eqref{tangent_normal}, $s = \mathbf{S}_{\FF(\theta_M)}(x)$ is the directional sign associated to $x$.
For $t\in[-1,1]$, let $x_t \in \mathcal{M}_{s}^U$ be a parameterization of the reference sign curve associated to $s$, as in \eqref{eqxtSC}.
Immediately, one may note that $\langle x_t, \FF(\theta_M)\rangle = t$, and $x_t = x$ for $t=\langle x,\FF(\theta_M)\rangle$. 
Besides, $D_\nu(\QQ(x_t)) = 1 - d(x_t,\FF(\theta_M))/\pi = 1 - \arccos(t)/\pi$, so, as soon as $t \geq \langle x,\FF(\theta_m)\rangle$, 
$
D_\nu(\QQ(x_t)) \geq D_\nu(\QQ(x)).
$
\end{proof}

Explicit formulations for rotationally invariant distributions are given in \cite{hallin2022nonparametric}, and recalled in the supplementary material,
%In particular, they show that MK quantile contours coincide with Mahalanobis ones, \cite{ley2014new}, for such distributions, 
so that the following is straightforward.

\begin{cor}\label{CurvilinearMonotSC_sphericdata2}
For rotationally invariant distributions, sign curves are great circles. 
Thus, the MK depth verifies linear monotonicity along great circles, relative to the deepest point. 
\end{cor}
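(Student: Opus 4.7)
The plan is to derive the corollary directly from Proposition \ref{CurvilinearMonotSC2} by identifying the sign curves $\QQ(\mathcal{M}_s^U)$ with great circles passing through the deepest point $\FF(\theta_M)$ in the rotationally invariant setting. I would first recall the explicit form of $\QQ$ for rotationally invariant distributions given in Appendix \ref{explicitforms}: such a measure $\nu$ is invariant under rotations about its symmetry axis, which coincides with the Fr\'echet median $\theta_M$ and with the deepest point $\FF(\theta_M)$; moreover, $\QQ$ acts on points expressed in the parallel/meridian coordinate system \eqref{tangent_normal} by transforming the latitudinal coordinate via a one-dimensional quantile transform, while preserving the directional sign.

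From this explicit form, the reference meridian $\mathcal{M}_s^U = \{t\FF(\theta_M) + \sqrt{1-t^2}s : t \in [-1,1]\}$, which is itself the great circle through $\FF(\theta_M)$ and $s$, is mapped bijectively by $\QQ$ to a curve of the same form $\{t'\FF(\theta_M) + \sqrt{1-t'^2}s : t' \in [-1,1]\}$ for the same $s\in \mathcal{C}^U_{1/2}$. This stems from uniqueness of the MK quantile function: by rotational invariance of $\nu$, $\QQ$ must commute with the rotations about $\FF(\theta_M)$, hence the longitudinal coordinate is preserved exactly. In particular, every sign curve $\QQ(\mathcal{M}_s^U)$ is a great circle passing through the deepest point $\FF(\theta_M)$. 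Conversely, any great circle through $\FF(\theta_M)$ is a meridian of the $\FF(\theta_M)$-oriented parallel/meridian system, hence arises as such a sign curve.

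Finally, the parametrization $x_t = t\FF(\theta_M) + \sqrt{1-t^2}s$ from Proposition \ref{CurvilinearMonotSC2} traces the reference meridian $\mathcal{M}_s^U$ while approaching $\FF(\theta_M)$ as $t \to 1$. Its image $\QQ(x_t)$ therefore traces the associated great circle in the support of $\nu$ towards the deepest point $\FF(\theta_M)$. Monotonicity of $D_\nu$ along this curve then follows directly from Proposition \ref{CurvilinearMonotSC2}, which translates into linear monotonicity along great circles relative to $\FF(\theta_M)$. The main, and essentially only, technical step is the identification of sign curves with great circles; once the explicit push-forward structure of rotationally invariant MK quantile maps is invoked, the rest is immediate.
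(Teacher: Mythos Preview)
Your proposal is correct and follows essentially the same route as the paper: both arguments reduce to the explicit form of $\QQ$ for rotationally invariant distributions from Appendix \ref{explicitforms}, observe that $\FF(\theta_M)=\theta_M$ and that $\QQ$ changes only the latitudinal coordinate while fixing the directional sign $s$, so that $\QQ(\mathcal{M}_s^U)$ is again the great circle $\mathcal{M}_s^U$, and then invoke Proposition \ref{CurvilinearMonotSC2}. Your additional justification via uniqueness of $\QQ$ and commutation with rotations about $\theta_M$ is valid (it is Proposition \ref{invar_rot} specialized to such rotations) but is not needed once the explicit formula \eqref{explicitQQrotinv} is quoted; the paper simply cites the explicit form and concludes in one line.
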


\begin{proof}
Without loss of generality, assume that $\tht_M =(0,0,1)^T$, which can always be ensured by applying a rotation to the data. 
Then, from the supplementary of \cite{hallin2022nonparametric}, $x\in \Sph$ and $\QQ(x)$ only differ from their latitude, under the assumption of $\nu$ to be rotationally invariant.
Thus, sign curves are great circles, and the result follows.
\end{proof}

Other desirable axioms have been put forward recently in \cite{nagy2024theoretical,Nagy2023}, namely the \textit{upper semi-continuity} and the \textit{non-rigidity of central regions}. 
\textit{Upper semi-continuity} is ensured as soon as the MK distribution function $\FF$ is continuous, thus at least for $\nu\in\mathbf{B}_2$. 
When $\nu$ is arbitrary, taking the regularized directional MK depth built from $\FF_\ee$, for $\ee>0$, imposes continuity. 
The \textit{non-rigidity of central regions} states that quantile regions are not restricted to be spherical caps, which is readily true for the MK depth.
In fact, its adaptivity to the underlying support can be seen as a stronger \textit{non-rigidity} axiom, requiring that $\QQ(U) \sim \nu$ as soon as $U\sim\mu$.
Furthermore, Proposition \ref{CurvilinearMonotSC2} and Corollary \ref{CurvilinearMonotSC_sphericdata2} shed some light on the non-verified axiom of \textit{monotony along great circles}.
Our results suggest that the directional MK depth alleviates these axioms when necessary, for complex distributions such as mixtures, whereas the axioms are fulfilled for distributions for which it is useful, in particular for rotationally invariant ones.

\section{Numerical experiments}\label{numexp}

\subsection{Choice of $\ee$ and comparison of EOT and OT}\label{mse}

The calibration of $\epsilon$ is a common issue in EOT. 
It can be selected by cross-validation, 
as with any hyper-parameter, or it can be selected with a transport-based
criterion \cite{vacher2022parameter}. 
In this section, we shall see that $\epsilon\approx0.1$ leads to accurate results overall, leading to this choice as a baseline in all the experiments of this paper.

\subsubsection{Quantitative effect of regularization}

In Figure \ref{fig_compar_trueMSE}, we study the influence of the regularization parameter on a quantitative metric: we compare against ground-truth contours of the isotropic von-Mises Fisher distribution. Indeed, this is the only distribution for which explicit formula are known. This was first established by \cite{hallin2022nonparametric} and it is recalled in the supplementary material for the sake of completeness.
The mean-squared error from uniform samples $(x_i) \subset \Sph$ is
$$
\mathcal{R}_n(\widehat{Q}) = \frac{1}{n}\sum_{i=1}^n c( Q(x_i), \widehat{Q}(x_i)),
$$
for $\widehat{Q}$ denoting either our regularized MK quantile estimator or the unregularized one proposed in \cite{hallin2022nonparametric}. 
 Estimators $\widehat{Q}$ are evaluated on
samples of size $n=500$ drawn from the uniform $\mu_\Sph$ and from a von-Mises fisher distribution of location $(0,0,1)^T$ and concentration $\kappa = 10$. 
%For several values of $\ee$, the regularized $\QQ_\ee$ is estimated, and the unregularized estimator $\widehat{\QQ}_0$ from \cite{hallin2022nonparametric} is computed. 
By doing this experiment $100$ times, we obtain a boxplot of MSE values for various values of $\ee$. The results are reported in Figure \ref{fig_compar_trueMSE}, where $\ee=0$ refers to the MSE of $\widehat{\QQ}_0$. 
The dashed horizontal line illustrates the median value for the MSE of $\widehat{\QQ}_0$. 
It can be observed that entropic regularization is able to significantly outperform the estimation of the quantile map, in particular for values around $\ee \approx 0.1$.  
Increasing the sample size leads to similar conclusions. 

\begin{figure}[htbp]
\centering 

{\subfigure[Sample size $n=500$]{\includegraphics[width=0.4\textwidth]{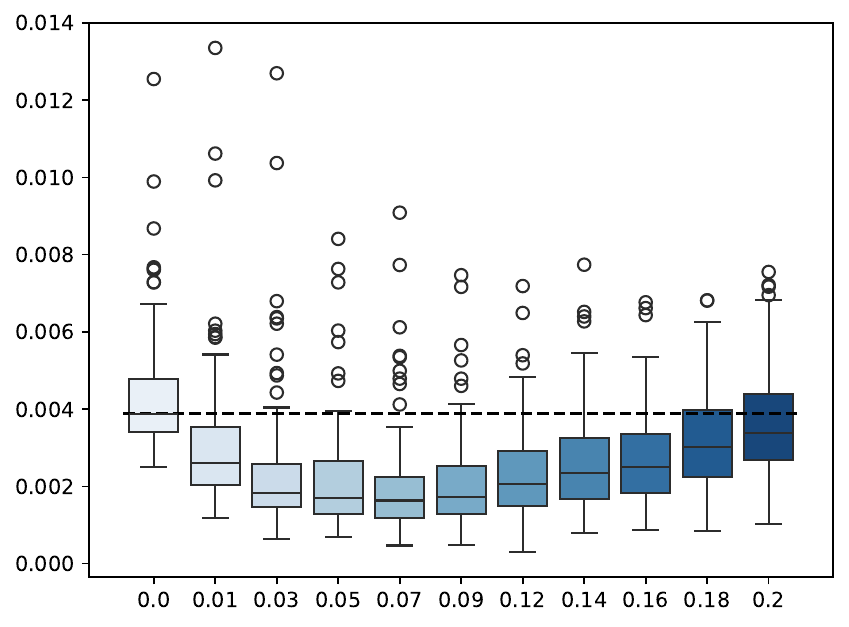}}}
{\subfigure[Sample size $n=1000$]{\includegraphics[width=0.4\textwidth]{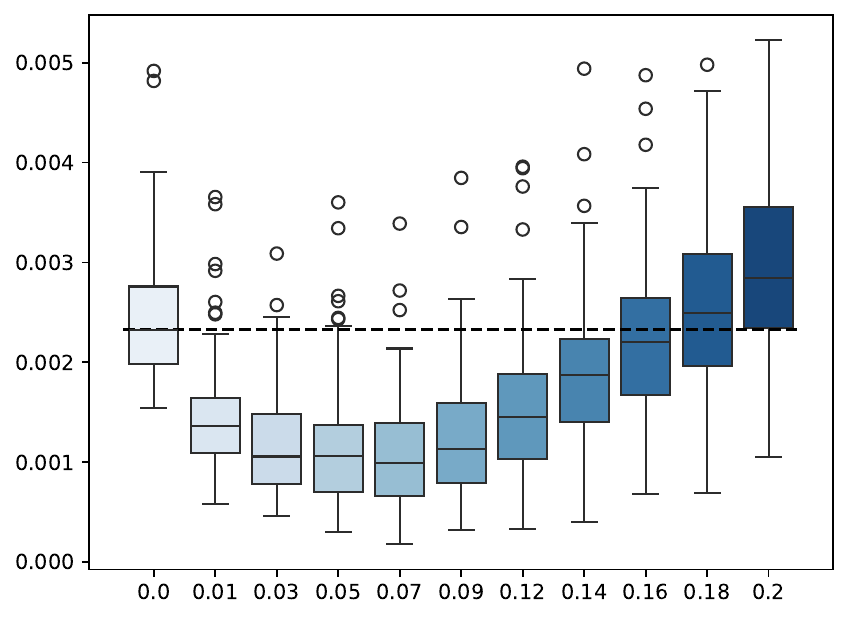}}}

\caption{Mean squared error $\mathcal{R}_n(\widehat{\QQ}_\ee)$ as a function of the regularization parameter $\ee \in [0,0.2]$. The horizontal dashed-line is the value of $\mathcal{R}_n(\widehat{\QQ}_0)$.}
\label{fig_compar_trueMSE}
\end{figure}

\subsubsection{Qualitative effect of regularization} 
In Figure \ref{fig_various_eps}, we visually compare regularized and unregularized MK spherical quantile contours of orders $24.4\%, 48.8\%, 75.6\%$ on the same von-Mises fisher distribution than in Figure \ref{fig_compar_trueMSE}.
Such uncommon probability contents are inherent to unregularized contours because the number of contours (and their size) depends on a fixed grid, and a fortiori on the sample size, that is here fixed at $n=2001$. 
Ground-truth contours are presented together with unregularized and regularized ones, for $\ee\in\{0.01,0.05,1\}$. 
Each contour contains $100$ points, linked by straight lines. 
For $\ee=0.01$, contours adapt too much on the finite-sample data, causing errors as ground-truth contours are smoother.
For $\ee=1$, contours are smoother, but there is too much bias in the approximation between $\widehat{\QQ}_\ee$ and the underlying ground truth. 
For the well-chosen $\ee=0.05$, the trade-off between regularity and low-bias allows the better estimation. 
This sheds some light on the behavior of regularization.
The lower the $\ee$, the more adapted $\widehat{\QQ}_\ee$ is to the finite-sample data and its irregularities. 
Larger values of $\ee$ induce smoother contours, as a byproduct of a greater regularity for $\widehat{\QQ}_\ee$. 

\begin{figure}[htbp]
\centering
{\subfigure[$\ee = 0.01$]{\includegraphics[width=0.3\textwidth]{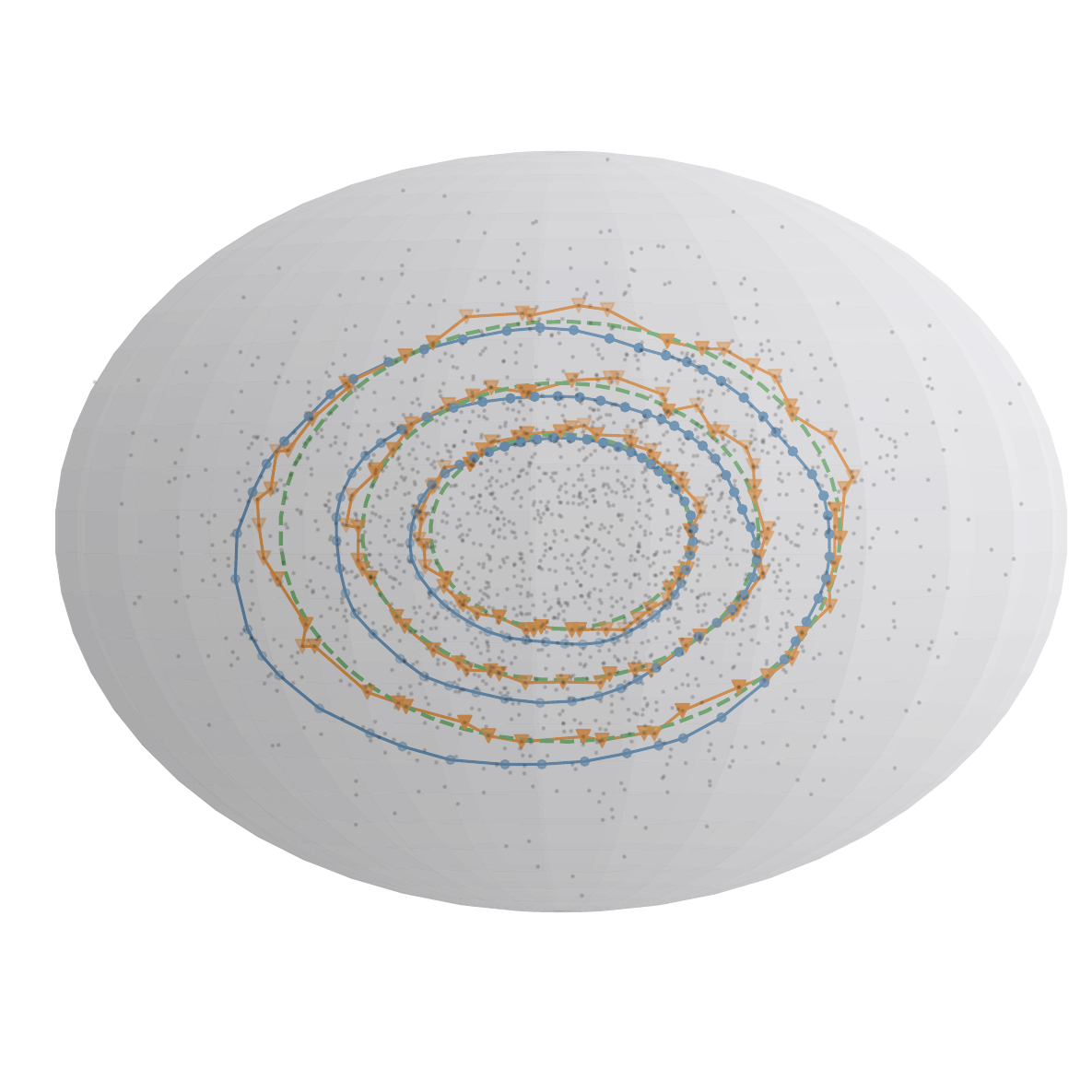}}}
{\subfigure[$\ee = 0.05$]{\includegraphics[width=0.3\textwidth]{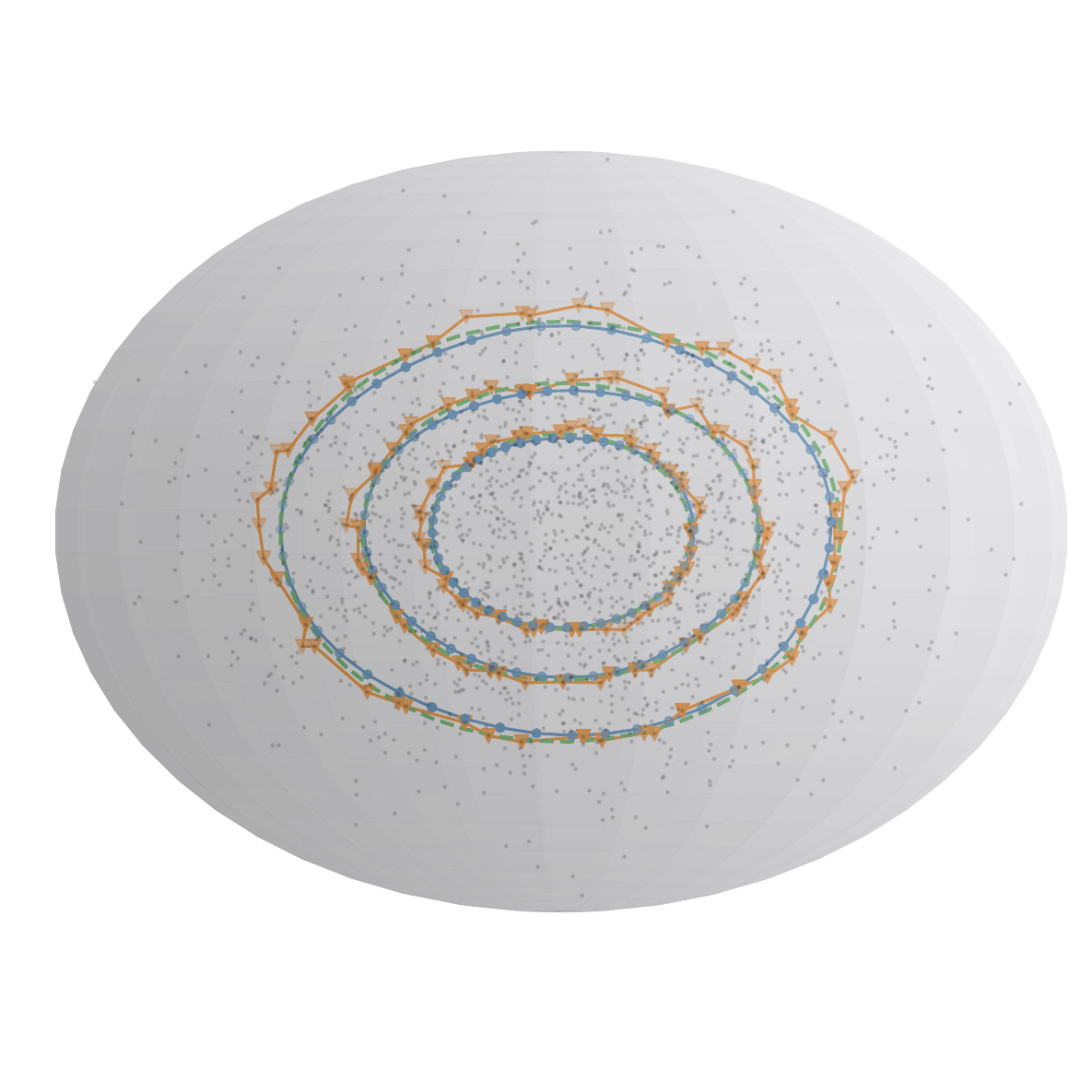}}}
{\subfigure[$\ee = 1$]{\includegraphics[width=0.3\textwidth]{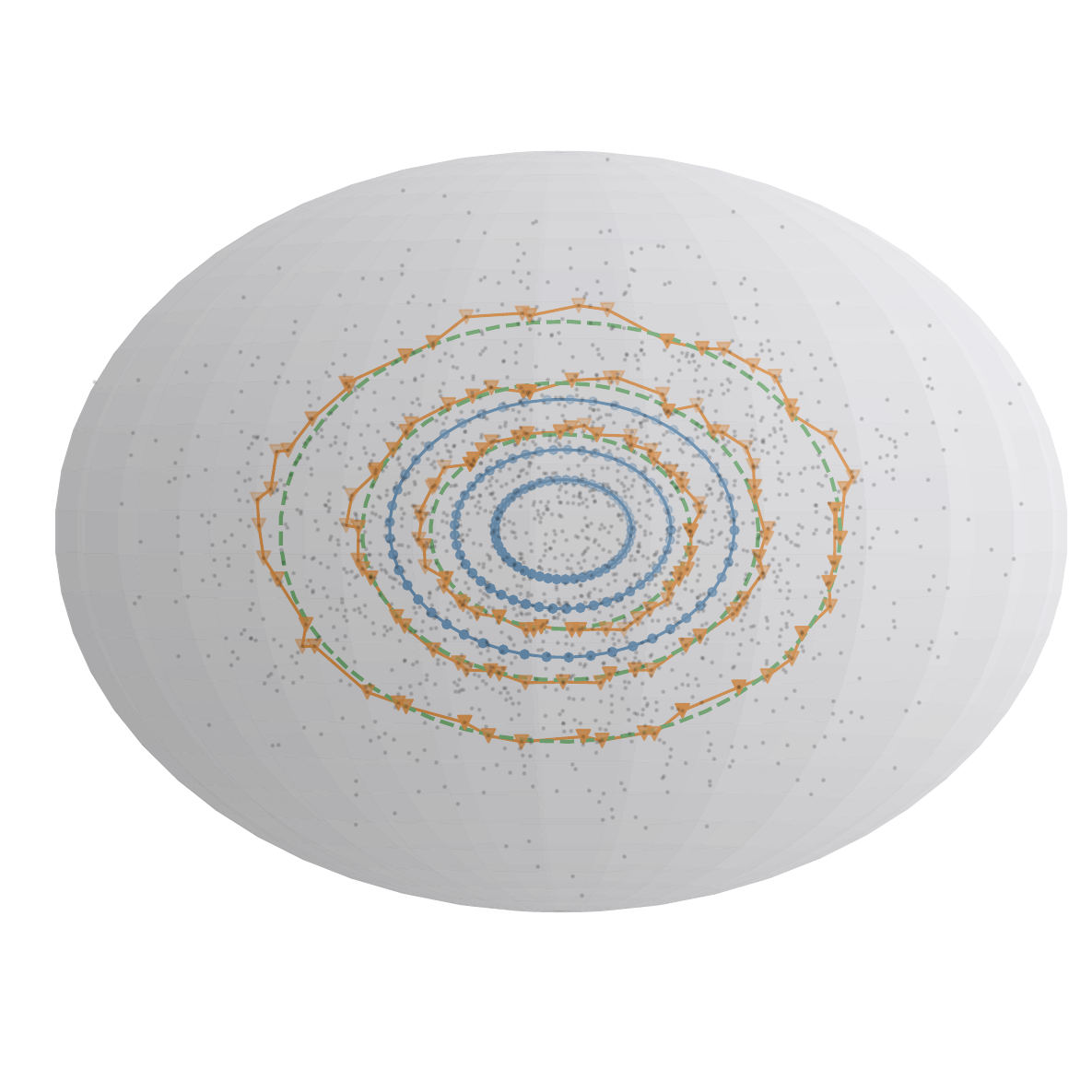}}}

\caption{Empirical regularized quantile contours (blue dots), unregularized ones (orange triangles), and ground truth (dashed green line).}
\label{fig_various_eps}
\end{figure}

\subsubsection{Structural differences between regularized and unregularized estimation}

The estimation of contours in \cite{hallin2022nonparametric} involves a matching between two samples of same size, 
hence it is not a functional object, in contrast to our entropic estimator.

\textbf{Two versus one OT problem.} Firstly, we stress that the estimation of contours in \cite{hallin2022nonparametric} requires to solve two different OT problems.
The construction of a uniform grid requires the choice of a reference central point (there is no origin on the sphere). 
Thus, a first OT problem estimates this central point $\FF(\theta_M)$ from the Fr\'echet median $\theta_M$, 
and a second one involves a grid (the reference contours) oriented towards the estimate of $\FF(\theta_M)$. 
This is not at all the case with our algorithm for which the estimate $\widehat{\uu}_{\ee,n}$ yields directly $\hat{\FF}_{N,n}^{\ee}$ and $\hat{\QQ}_{N,n}^{\ee}$, and a fortiori $\hat{\FF}_{N,n}^{\ee}(\theta_M)$.
Reference contours are then oriented to the central point $\hat{\FF}_{N,n}^{\ee}(\theta_M)$, and their image by $\hat{\QQ}_{N,n}^{\ee}$ yields quantile contours without the need to solve an additional OT problem. 
The time to compute a single OT/EOT problem is reported in Figure \ref{timecomput}, where it is shown that our algoritm is mostly profitable for large sample sizes.

\begin{figure}[htbp]
\centering
\includegraphics[width=0.4\textwidth,height=0.21\textwidth]{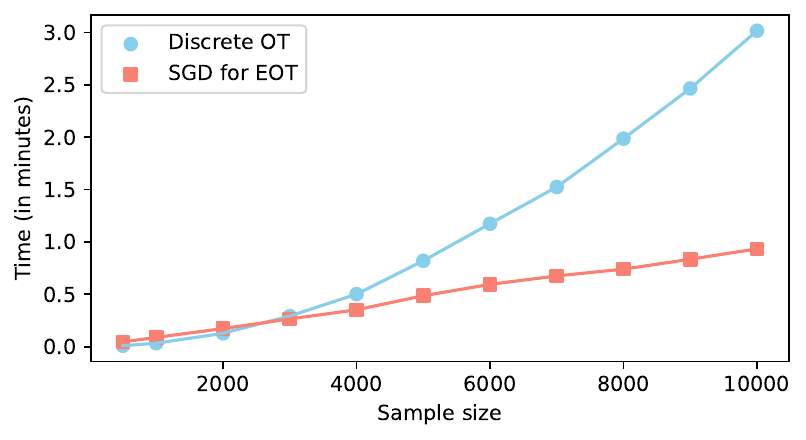}
\caption{Time for solving regularized and unregularized optimal transport between two data with $n$ instances, as a function of $n$.}
\label{timecomput}
\end{figure} 

\textbf{Out-of-sample estimation.}
Figure \ref{RegVsUnregMixture} illustrates MK quantile contours on a mixture of von-Mises Fisher distributions, with $\ee=0.1$. 
Because unregularized quantiles involve a bijection between the reference grid and the data, it is not able to compute the MK depth on points outside of the data. 
On the contrary, our entropic estimator $\hat{\FF}_{N,n}^{\ee}$ can be computed at any $x\in\mathbb{S}^{d-1}$, allowing to evaluate the MK depth at any out-of-sample observation. 
The upcoming numerical experiments illustrate how this specific property of our estimator is needed for descriptive analysis and depth-based classification.

\begin{figure}[htbp]
\centering
{\subfigure[Reference grid]{\includegraphics[width=0.3\textwidth]{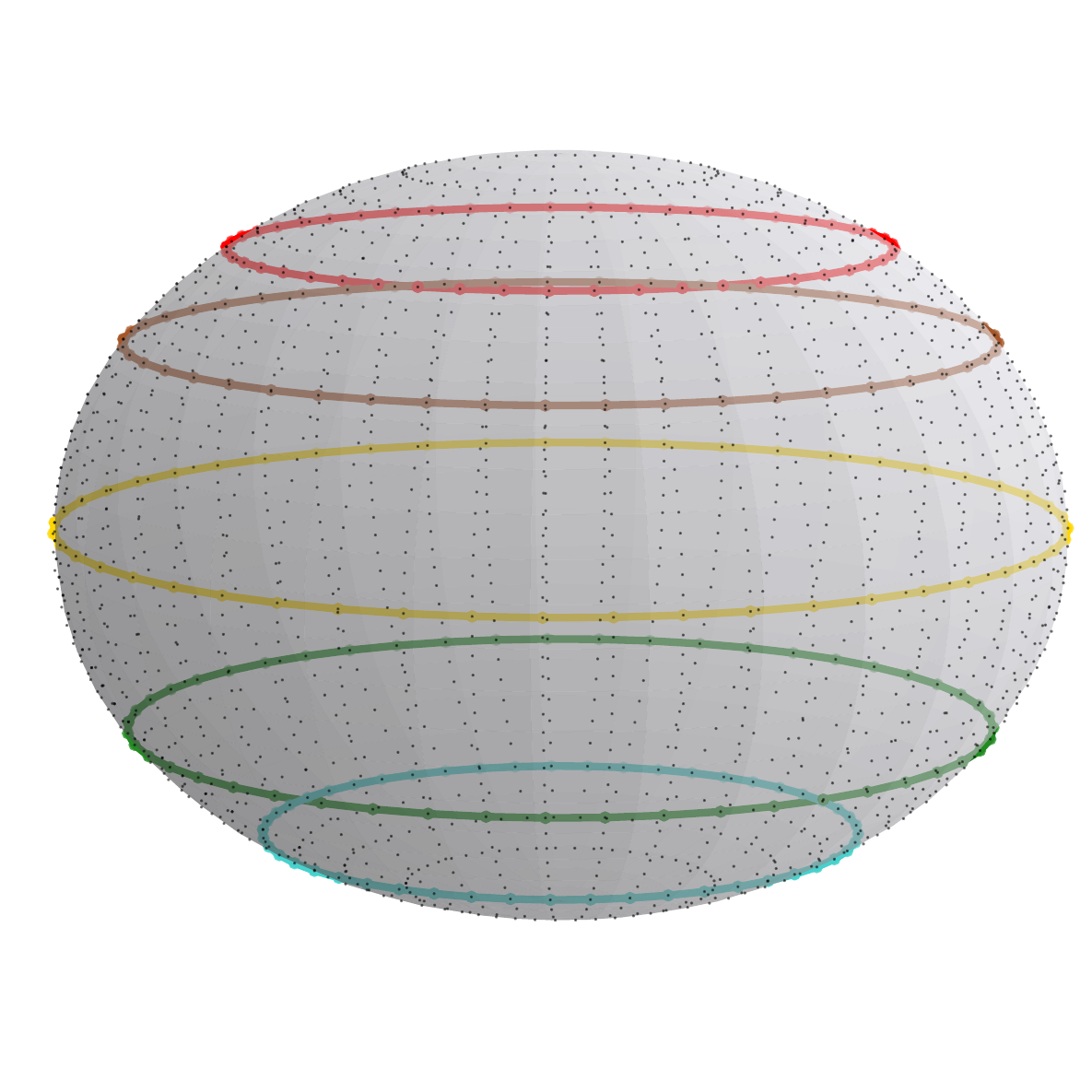}}}
{\subfigure[Unregularized contours]{\includegraphics[width=0.3\textwidth]{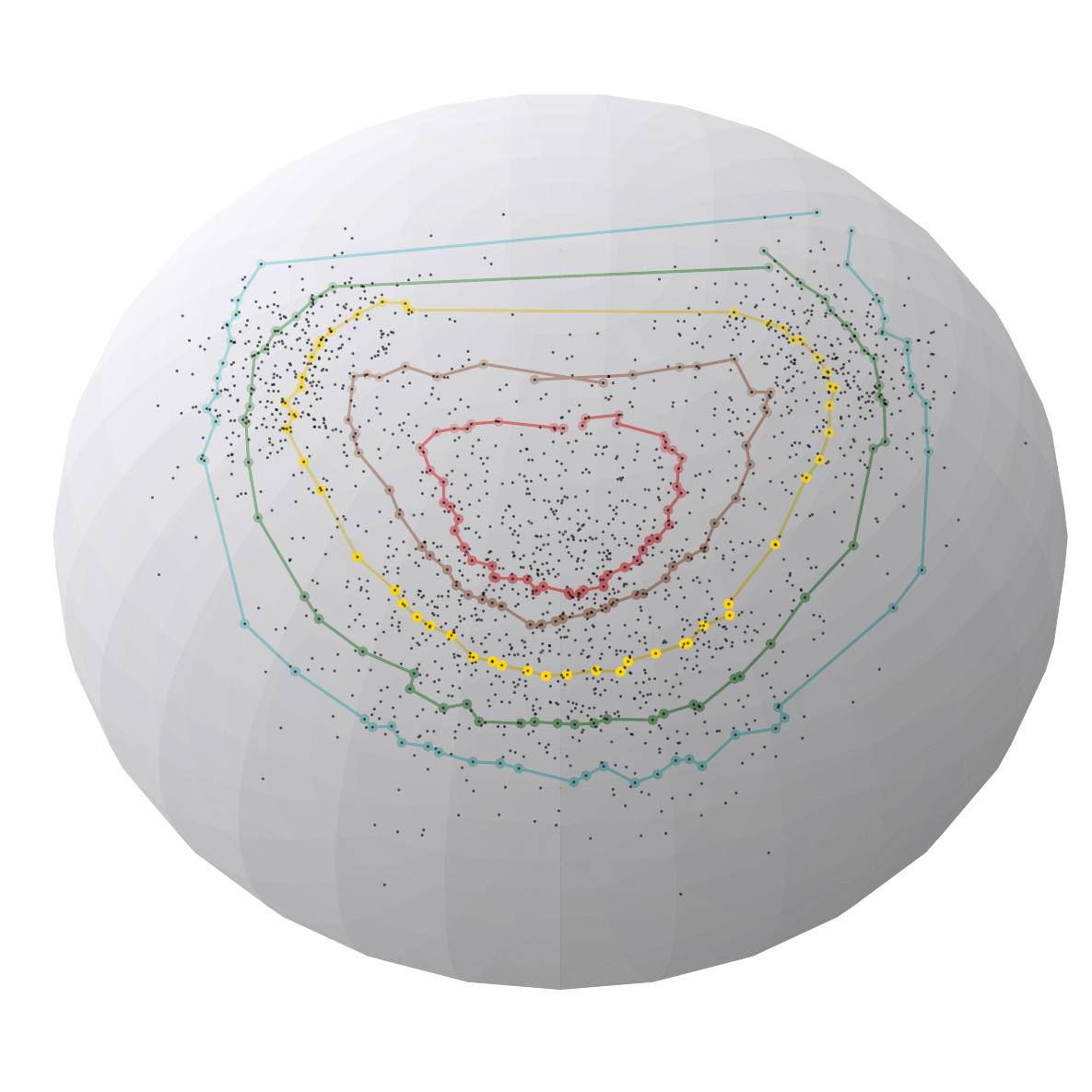}}}
{\subfigure[Regularized contours $\epsilon=0.1$]{\includegraphics[width=0.3\textwidth]{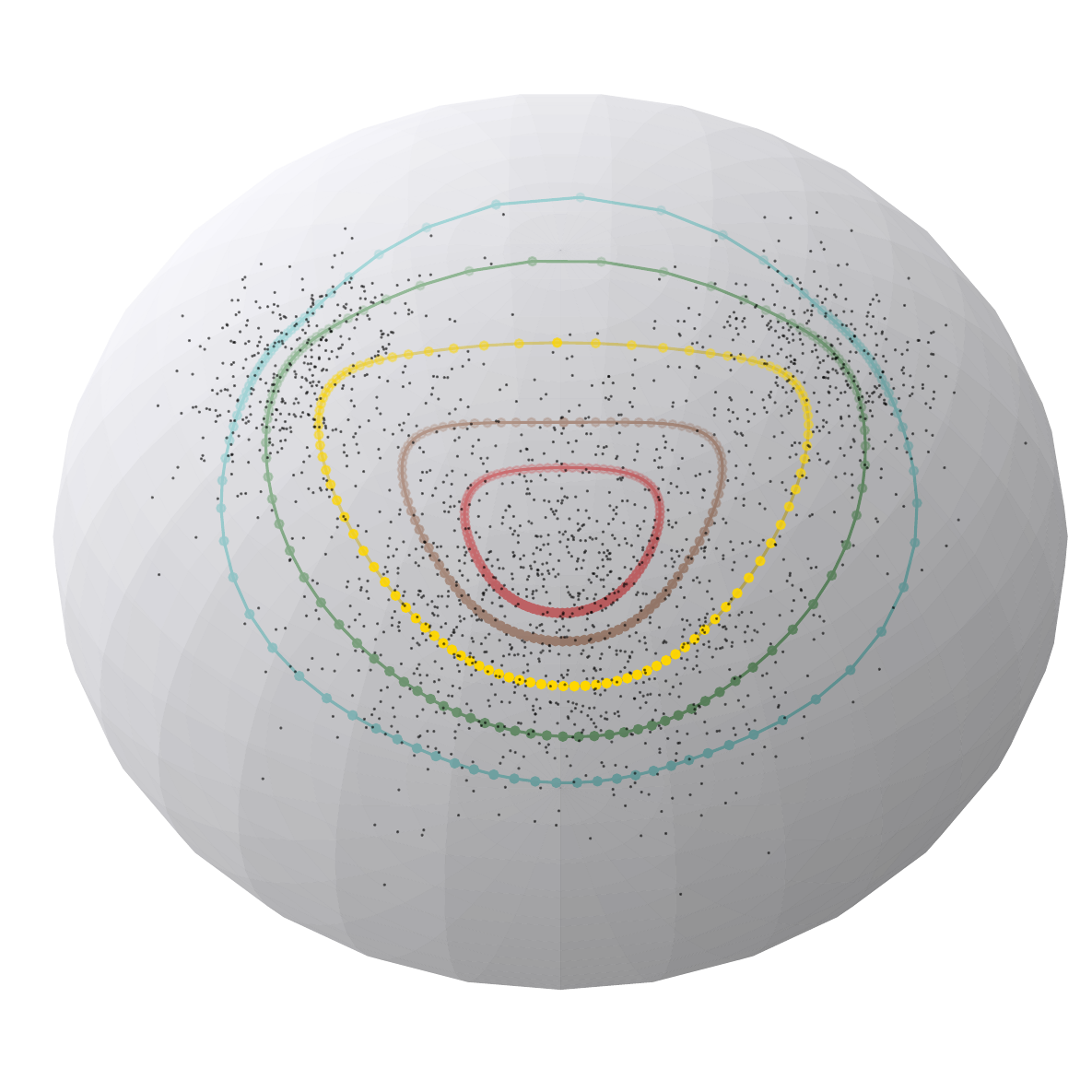}}}

\caption{Empirical MK quantile contours on a mixture of vMF distributions}
\label{RegVsUnregMixture}
\end{figure}

\subsection{Descriptive tools}\label{descriptiveTools}

The seminal paper \cite{liu1999multivariate} gathers descriptive tools based on data depths.
Monge-Kantorovich analogs already exist for data in $\RR^d$, and we shall now extend some of them to the directional setting.
We stress that the ability of our regularized estimator to interpolate between data points is crucial for $(i)$ smooth contours in practice and $(ii)$ computing volumes of quantile regions.

\subsubsection{Representative plots}

Firstly, \cite{liu1999multivariate} study representative plots for bivariate data, and 
the MK analog is given by the descriptive plots from \cite{Hallin-AOS_2021,hallin2022nonparametric}, the latter with the added information of sign curves. 
Figure \ref{figSC} illustrates it on a Tangent von-Mises Fisher distribution, \cite{garcia2020optimal}, and on a Mixture of two von-Mises Fisher distributions, with the help of our empirical regularized quantile function. 
One can observe that
the shapes of the distributions are well recovered. 
Because entropic MK quantiles interpolate between data points, contours cross the void between mixture components on the right-hand side. 
A careful inspection shows that the number of points per contour within this void is much lower than in the high density areas. 
This illustrates how the variation of mass, that is the underlying geometry, is captured by our regularized estimator. 

\begin{figure}[htbp]
\centering
\includegraphics[width=0.35\textwidth]{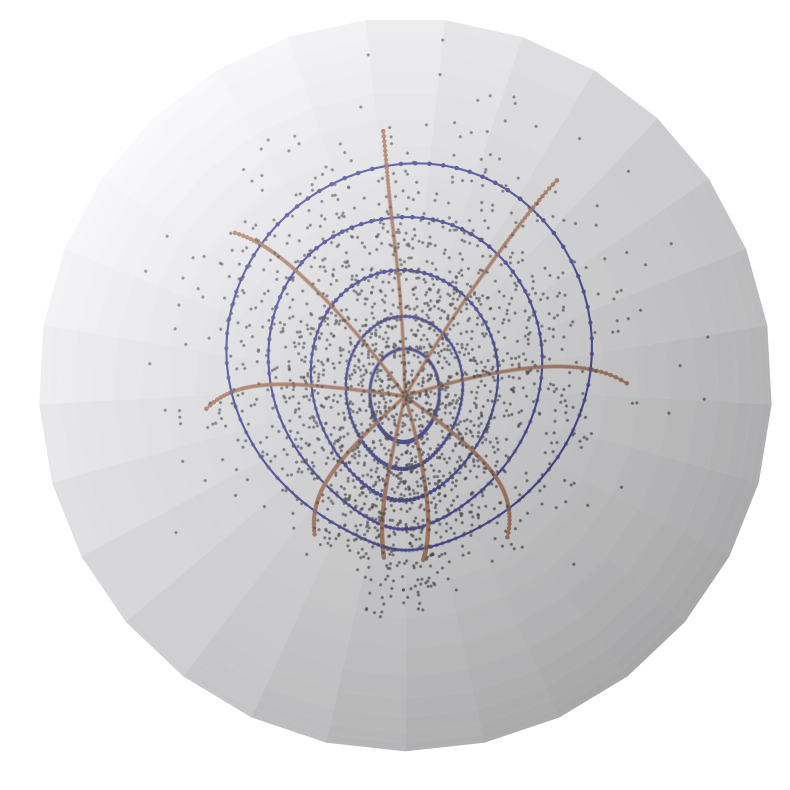}
\includegraphics[width=0.35\textwidth]{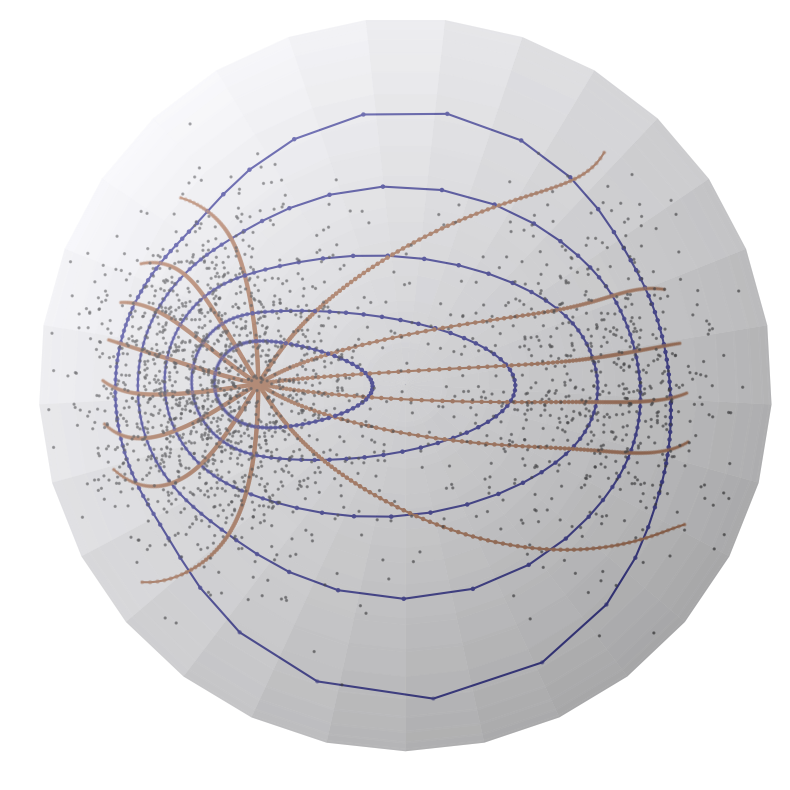}

\caption{Regularized quantile contours of levels $\{0.1,0.25,0.5,0.75,0.9 \}$ and associated sign curves, with $\epsilon= 0.1$. }
\label{figSC}
\end{figure}

\subsubsection{Scale or dispersion}

We now present a graphical tool to describe the amount of dispersion, called the \textit{scale curve} in \cite{liu1999multivariate} and whose MK analog has been introduced in \cite{beirlant2019centeroutward} for Euclidean data. 
%We mention that, in \cite{liu1999multivariate}, this tool is also used to compare the variance of vector-valued estimators. 
%Given any level $\alpha \in[0,1]$, we consider the volumes $V(\alpha)$ of MK quantile regions. 
The \textit{scale curve} is a plot of the volumes $V(\alpha)$ of MK quantile regions with respect to $\alpha \in[0,1]$.
The faster it grows, the greater the dispersion. Thus, if the scale curve of $\nu_1$ is consistently above the one of $\nu_2$, then $\nu_1$ is more spread out than $\nu_2$.
%On $\Sph$, the volume is bounded, so we consider the normalized $\mu_\Sph$ instead of $\sigma_\Sph$. 
Define 
$$
V(\alpha) = \int_{\mathbb{C}_\alpha} d\mu_\Sph(x) 
= \int_\Sph \mathds{1}_{\{x \in \mathbb{C}_\alpha\}} d\mu_\Sph(x)
= \int_\Sph \mathds{1}_{\{ \langle \FF(x) ,\FF(\theta_M) \rangle \geq 1 - 2\alpha\}} d\mu_\Sph(x).
$$
This can be estimated with a sample $U_1,\cdots, U_N$ from $\mu_\Sph$, by the proportion
$
V_{\ee,n}(\alpha) = \frac{1}{N} \sum_{i=1}^N \mathds{1}_{\{ \langle \hat{\FF}_{N,n}^{\ee}(U_i) ,\FF(\theta_M) \rangle \geq 1 - 2\alpha\}}.
$
Crucially, this requires to estimate $\FF(x)$ for out-of-sample observations $x$, for which regularization is needed.
On the left-hand side of Figure \ref{scalecurves}, we draw the scale curves of von-Mises Fisher distributions with varying concentration parameter $\kappa \in \{1,2,5,15\}$, which controls the dispersion of samples. 
It is well-captured that the lower the value of $\kappa$, the more spread out is the underlying distribution.
We also illustrate in the right-hand side of Figure \ref{scalecurves} that the scale curve captures the presence of outliers.
We consider three identical vMF distributions with dispersion parameter $\kappa = 15$ and mean $(0,1,0)^T$. 
Adding $N \in \{5,20,50\}$ outliers near from the North Pole $(0,0,1)^T$,
 the dispersion increases with 
the number of outliers,
which is precisely the expected behavior.

\begin{figure}[htbp]
\centering

{\subfigure{\includegraphics[width=0.3\textwidth,height=0.29\textwidth]{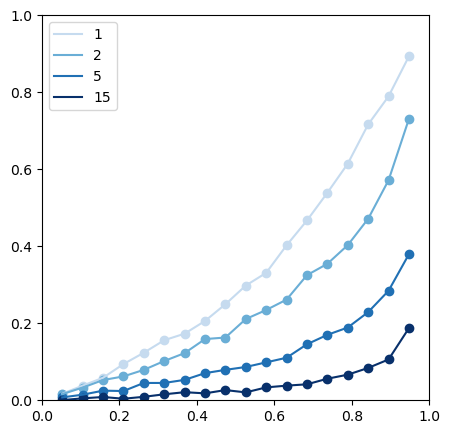}}}
\hspace{.5cm}
{\subfigure{\includegraphics[width=0.3\textwidth,height=0.29\textwidth]{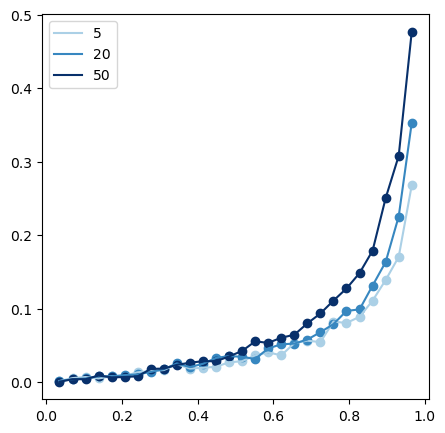}}}

\caption{Scale curves of von-Mises Fisher distributions with various (left) dispersion parameter $\kappa$ and (right) number $N$ of added outliers.}
\label{scalecurves}
\end{figure}

\subsection{Other concepts of quantiles}\label{numexp_compQuantiles}

In Figure \ref{fig_various_ee}, we display a visual comparison of existing concepts for quantiles on the sphere by focusing on mixtures of von-Mises Fisher distributions.
All distributions are centered for ease of visualization with azimuth angle $90$ degrees and elevation angle $5$ degrees.
 It can be observed from  Figure \ref {fig_various_ee} that
Mahalanobis quantile regions  \cite{ley2014new} are concentric spherical caps, whereas spatial quantiles \cite{konen2023spatial} and our regularized MK quantiles can exhibit more complex shapes that better fit the geometry of the data.
To that extent, spatial and regularized MK quantiles,  %obtained through our entropically regularized estimator $\QQ_\ee$, 
are  both more satisfactory.  
%For the spatial quantiles, our naive implementation forces them to belong to data samples. 
For each notion of quantiles, $100$ points are drawn within each contours, with straight lines to link them.
We emphasize that spatial quantiles are not indexed by their probability content, as opposed to the MK ones \cite{hallin2022nonparametric}.

\begin{figure}[htbp]
\centering
{\subfigure[Mahalanobis ]{\includegraphics[width=0.31\textwidth]{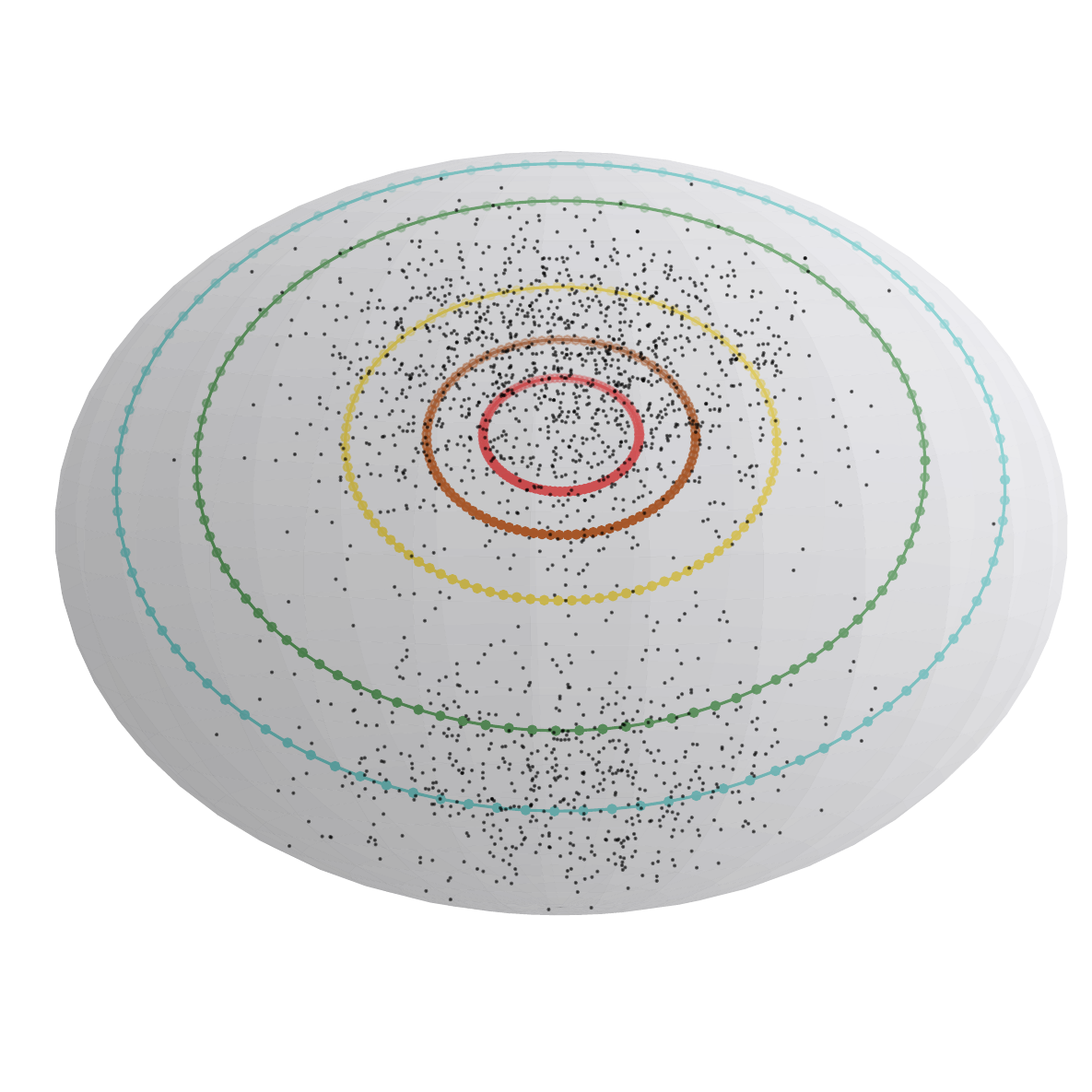}}}
{\subfigure[Spatial ]{\includegraphics[width=0.31\textwidth]{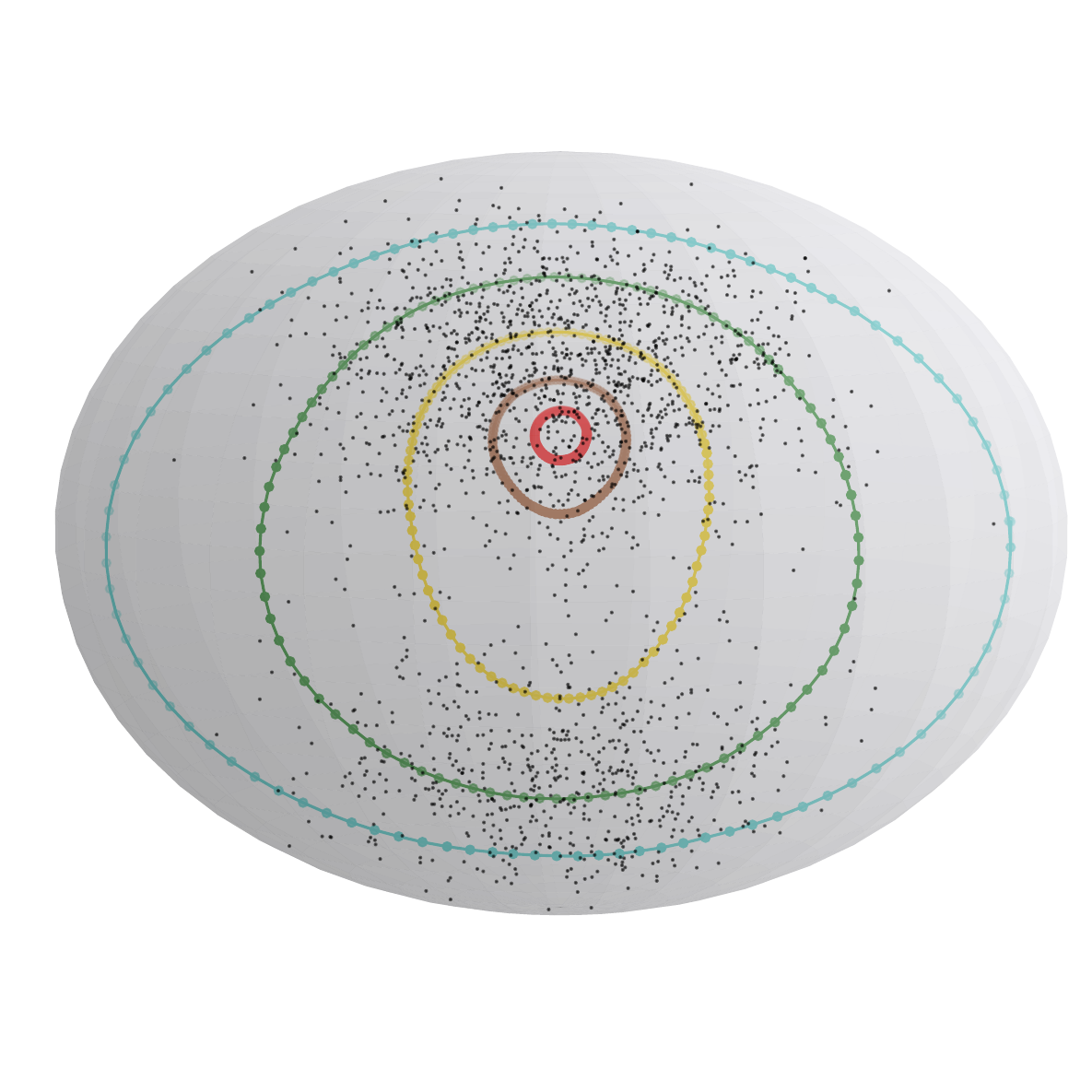}}}
{\subfigure[Monge-Kantorovich ]{\includegraphics[width=0.31\textwidth]{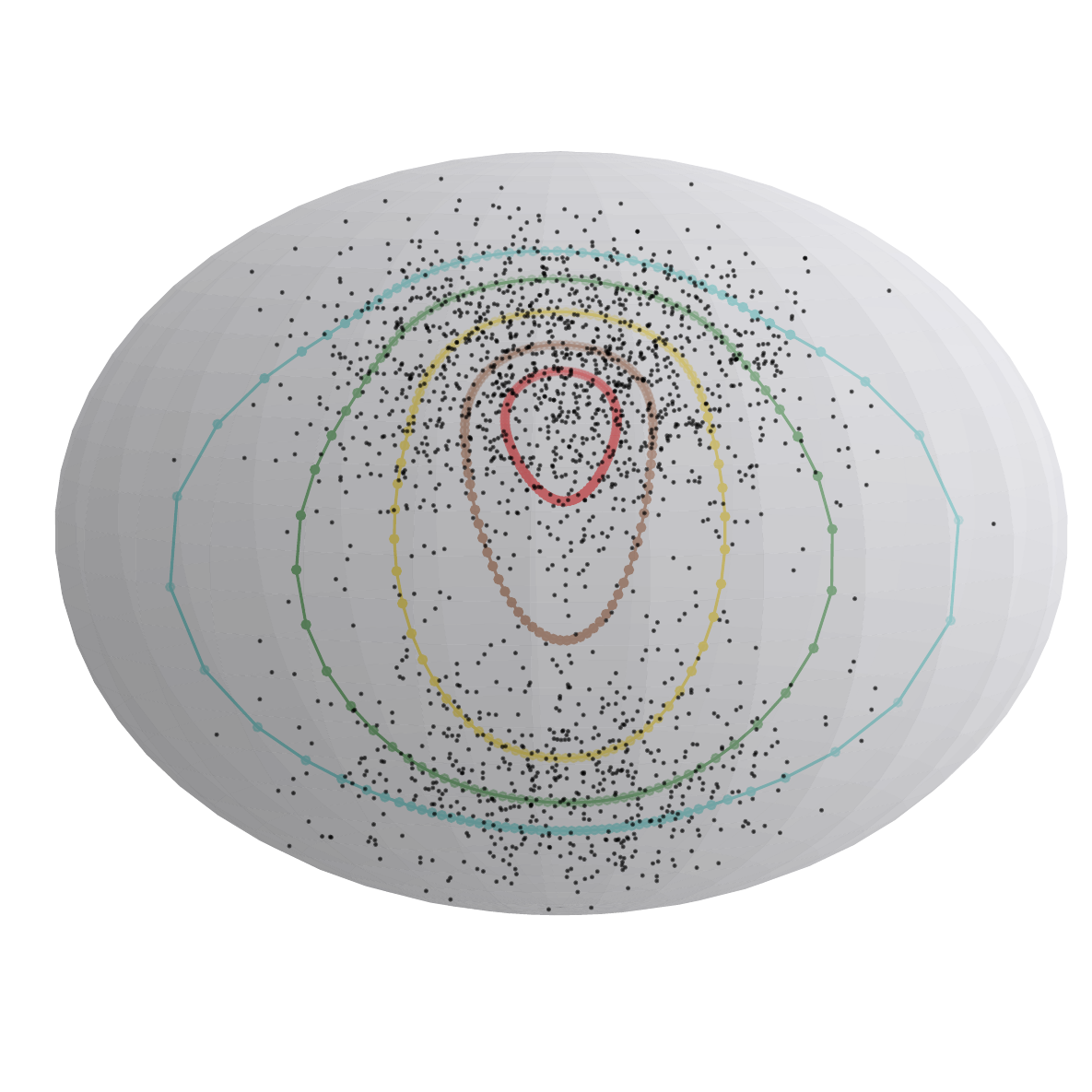}}}

{\subfigure[Mahalanobis ]{\includegraphics[width=0.31\textwidth]{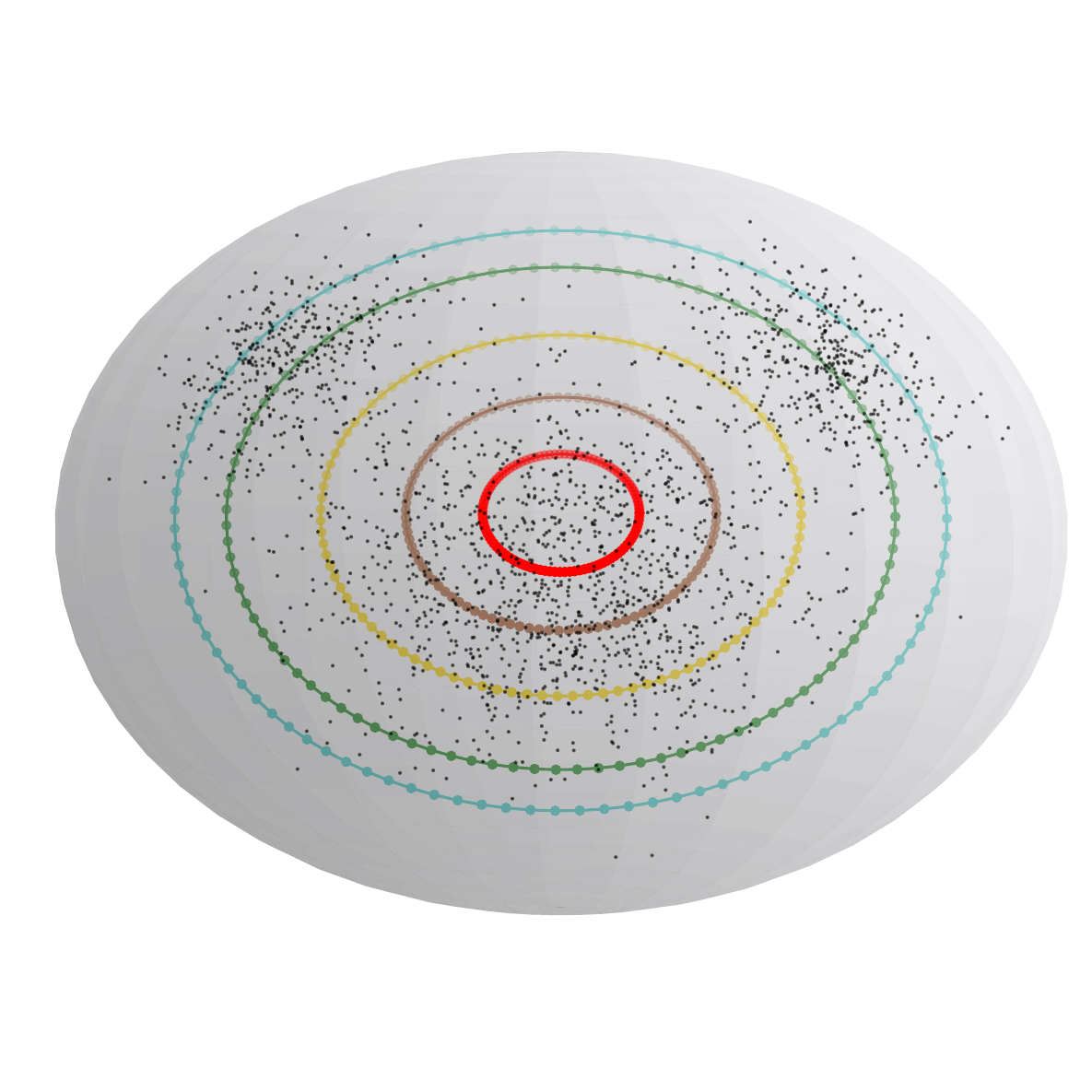}}}
{\subfigure[Spatial ]{\includegraphics[width=0.31\textwidth]{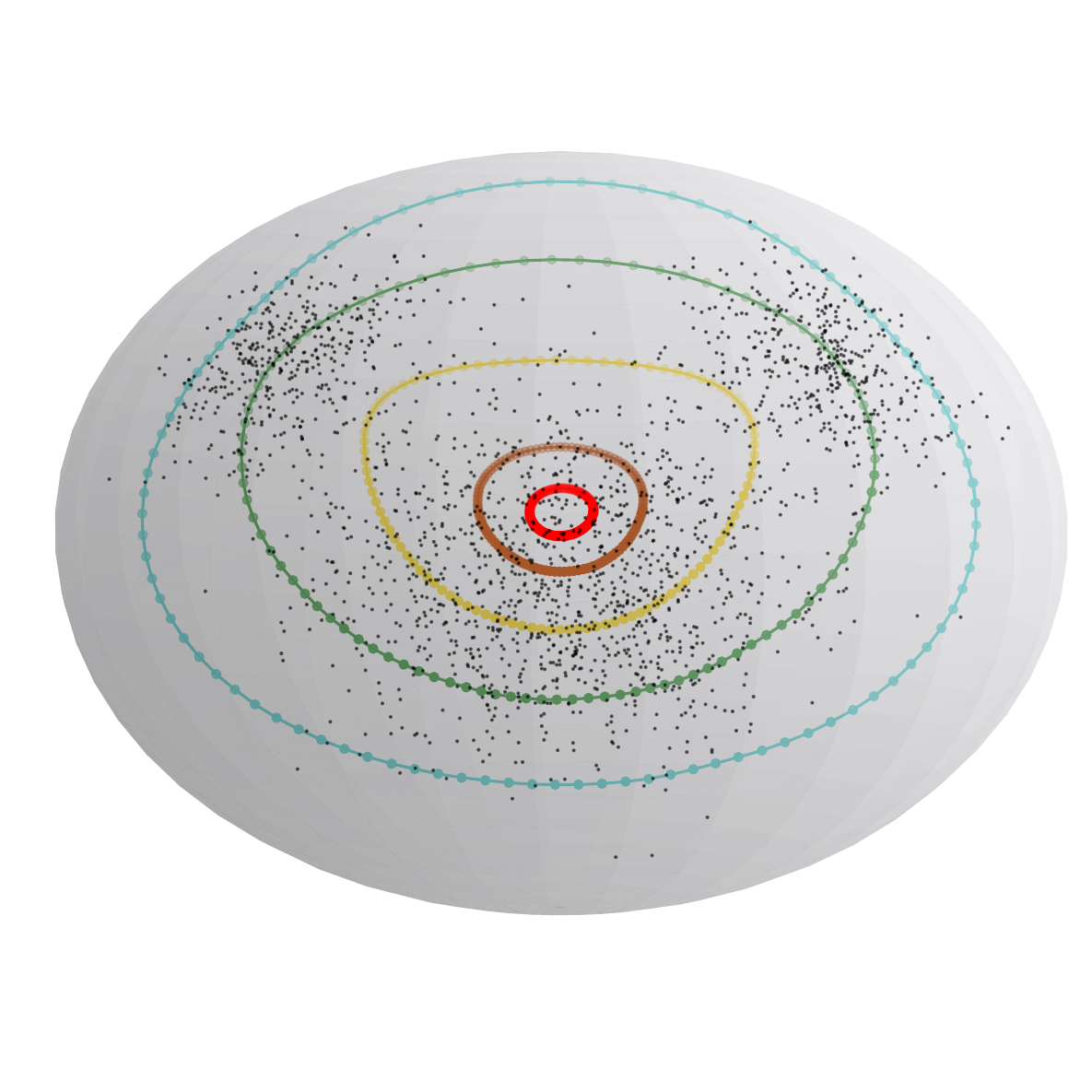}}}
{\subfigure[Monge-Kantorovich]{\includegraphics[width=0.31\textwidth]{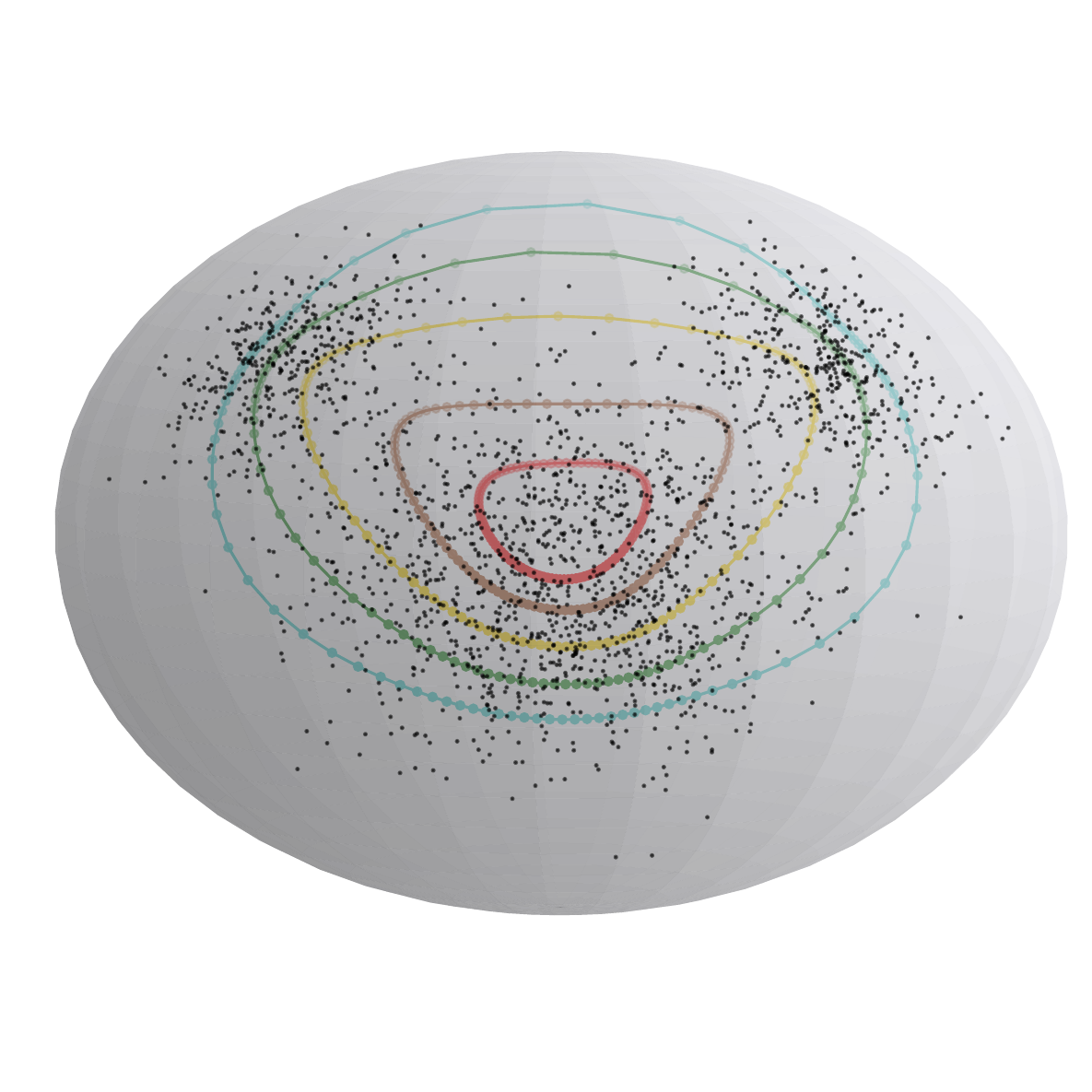}}}

{\subfigure[Mahalanobis]{\includegraphics[width=0.31\textwidth]{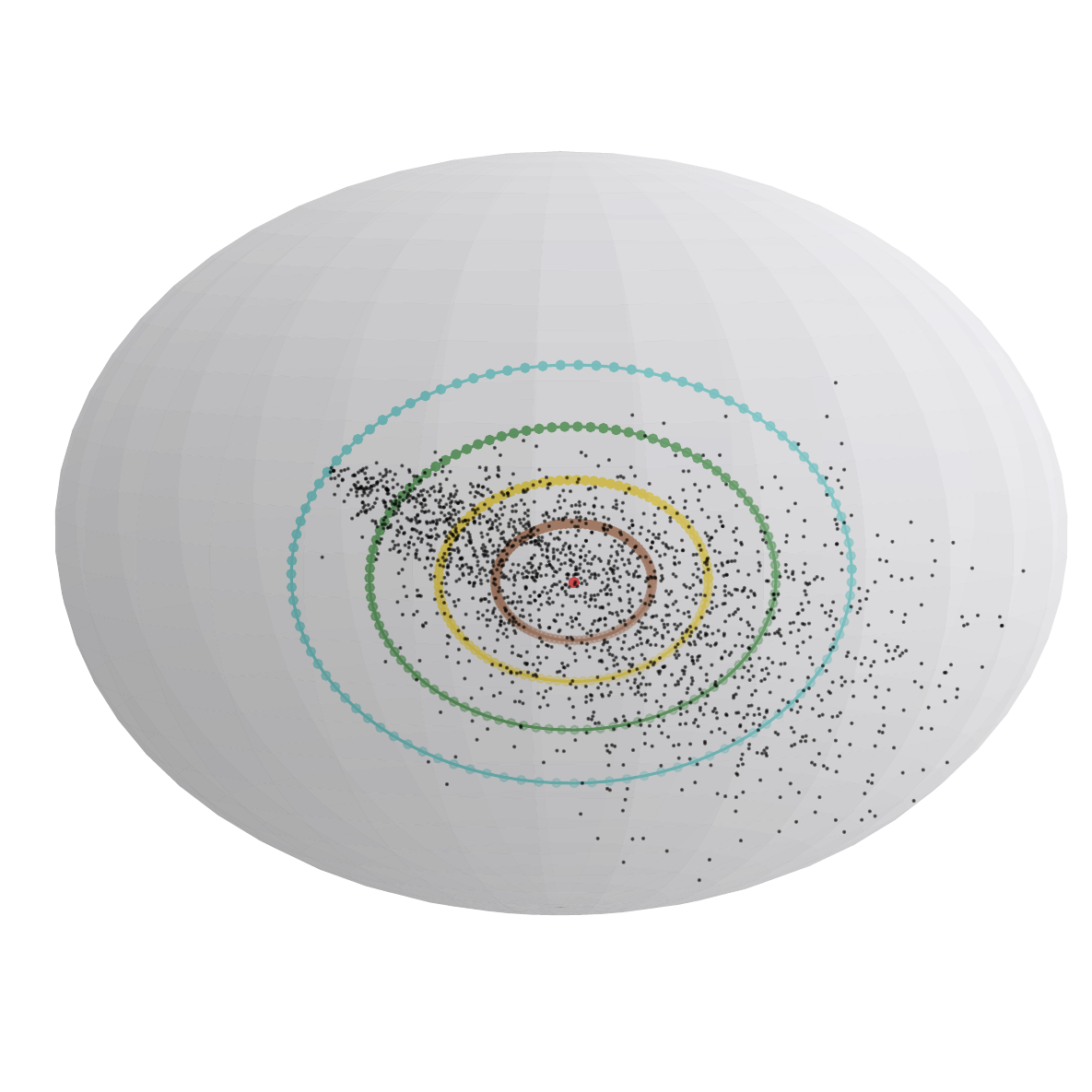}}}
{\subfigure[Spatial ]{\includegraphics[width=0.31\textwidth]{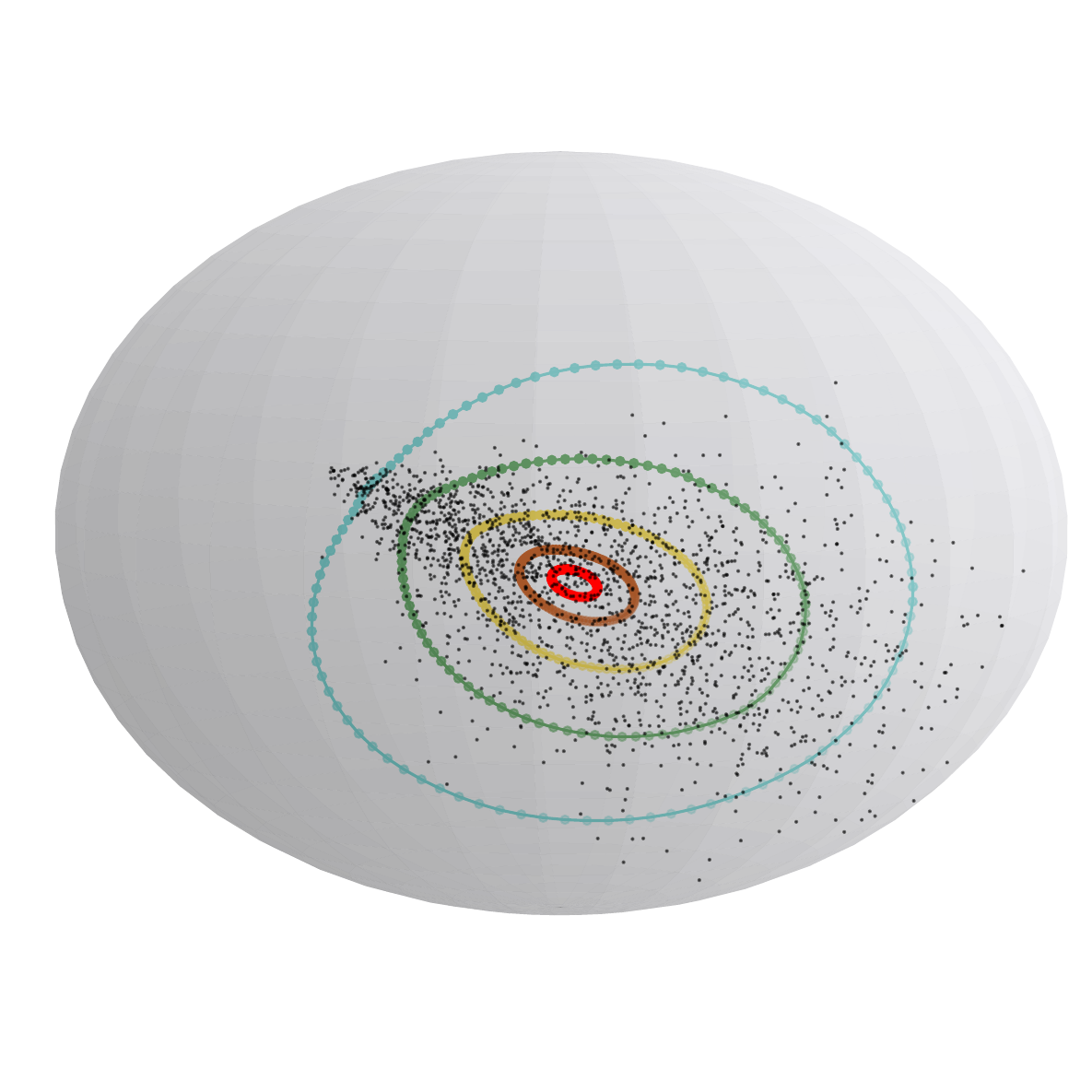}}}
{\subfigure[Monge-Kantorovich]{\includegraphics[width=0.31\textwidth]{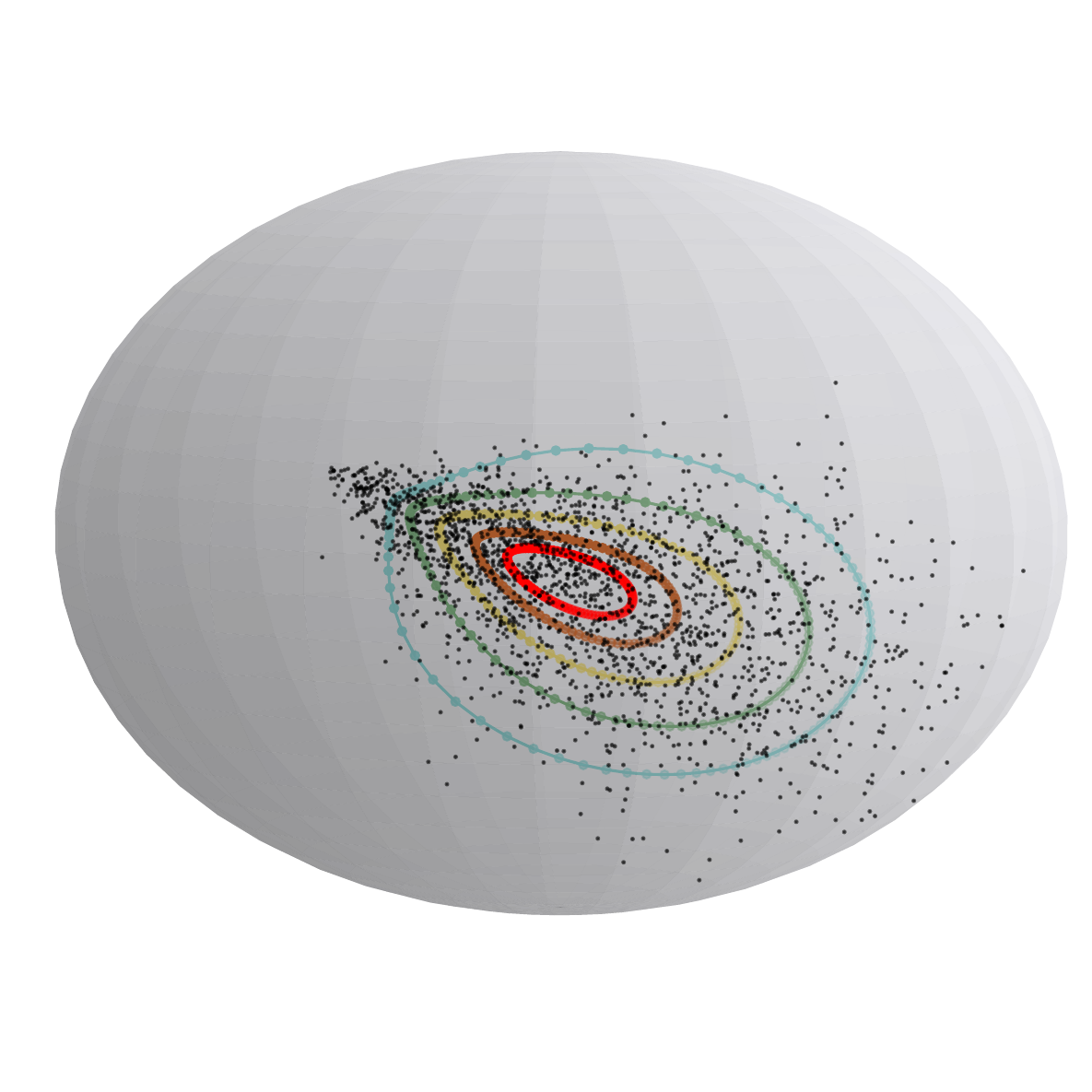}}}

\caption{Quantile contours $\{0.1,0.25,0.5,0.75,0.9\}$ for different definitions of quantiles and data sampled from mixtures of two (first row) and three (second row)  von-Mises Fisher distributions, and from a Tangent vMF (third row).}
\label{fig_various_ee}
\end{figure}

\subsection{Max-depth classification}\label{maxdepthclassif}

Following experiments in \cite{konen2023spatial}, we report a \textit{quantitative comparison} between different notions of directional quantiles on the task of supervised classification, based on the \textit{max-depth} approach from \cite{ghosh2005maximum}. 
Because statistical depths are designed to measure \textit{outlyingness} within a distribution, one can classify $x\in\Sph$ as coming from the distribution $\nu_1$ instead of $\nu_2$ if the depth of $x$ with respect to ($w.r.t.$) $\nu_1$ is greater than the one $w.r.t.$ $\nu_2$. 
As highlighted in \cite{konen2023spatial}, the vanishing property of many statistical depths \cite{liu1992ordering,tukey75}, raises issues when the sample $x$ lies outside the convex hull of empirical data, where new data is classified randomly.
On the contrary, the depth associated to spatial quantiles \cite{konen2023spatial}[Section 5], is strictly positive everywhere, and the same holds trivially for the MK depth.
The regularization parameter for EOT is taken as $\ee=10^{-1}$.
 Define $O$ the rotation matrix fixing the first coordinate and mapping $m_1 =(0,0,1)^T$ to $m_2=(0,\sin(\pi/3),\cos(\pi/3))^T$. 
 We consider for $(P_1,P_2)$ : 
\begin{enumerate}
\item \textit{von-Mises Fisher:} $P_1$ is a vMF with location $m_1$ and concentration $\kappa=3$, $P_2$ is the distribution of $OX$ for $X \sim P_1$.
\item \textit{Tangent von-Mises Fisher:} $P_1$ is the distribution of 
$$
X = Zm_1 + \sqrt{1-Z^2} S,
$$
where $Z = 2V-1$ for $V$ a $\text{Beta}(2,8)$ distribution, $S_3=0$ and $(S_1,S_2)^T$ follows a von-Mises Fisher distribution with location $(1,0)^T$ and concentration $\kappa'=5$. 
Again, $P_2$ is the distribution of $OX$ for $X \sim P_1$.
\item \textit{Non-convex data :} $P_1$ is the same mixture of three vMF distributions than in Figure \ref{fig_various_ee}. $P_2$ is a vMF distribution centered at $(0,0,1)^T$ with concentration parameter $\kappa = 50$. 
\end{enumerate}

Examples of data sampled from these three distributions are displayed  in Figure \ref{settings}, with observations sampled from $P_1$ in blue and $P_2$ in red.

\begin{figure}[htbp]
\centering
{\subfigure[von-Mises Fisher]{\includegraphics[width=0.3\textwidth]{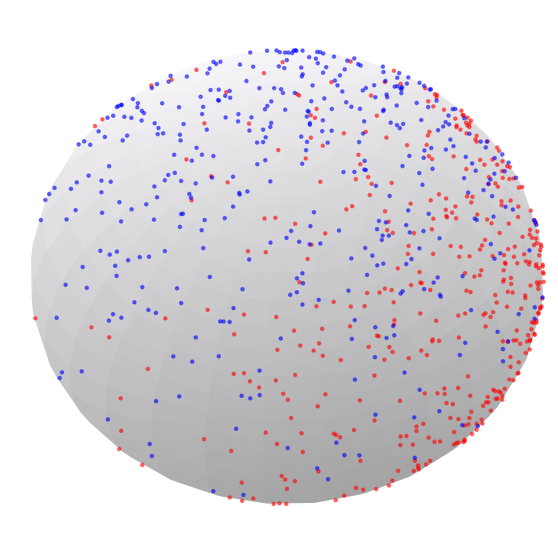}}}
{\subfigure[Tangent von-Mises Fisher]{\includegraphics[width=0.3\textwidth]{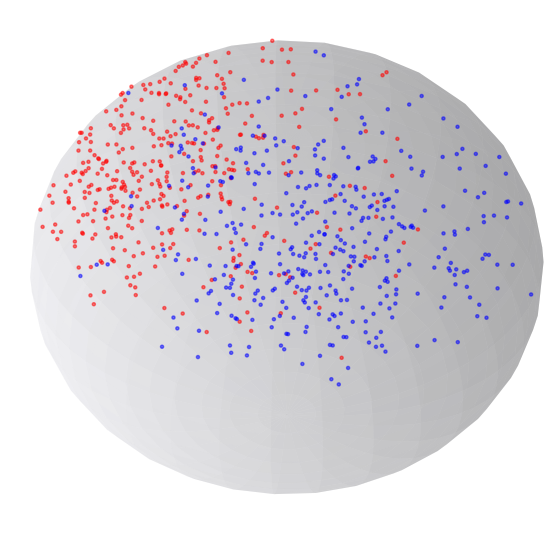}}}
{\subfigure[Non-convex data]{\includegraphics[width=0.3\textwidth]{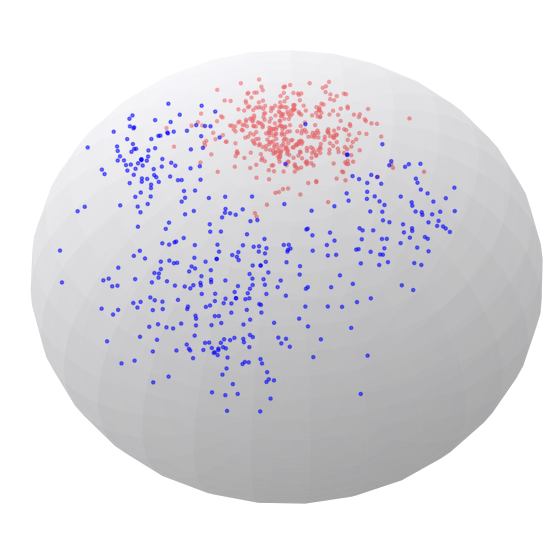}}}

\caption{Data for max-depth classification.}
\label{settings}
\end{figure}

In each setting, we learn the quantile functions on $200$ points coming from distributions $(P_1,P_2)$, before computing misclassification rates on a test data of $200$ different observations. 
This is repeated $100$ times and the misclassification rates are gathered in  
Figure \ref{fig_missclass}.
All methods provide comparative results when both the supports of $P_1$ and $P_2$ are convex. However, in the third setting, classification based on the MK depth significatively outperforms the others. 
This illustrates the potential of entropically regularized quantiles for statistical purposes.

\begin{figure}[h!]
\centering
{\subfigure[Von-Mises Fisher]{\includegraphics[width=0.32\textwidth]{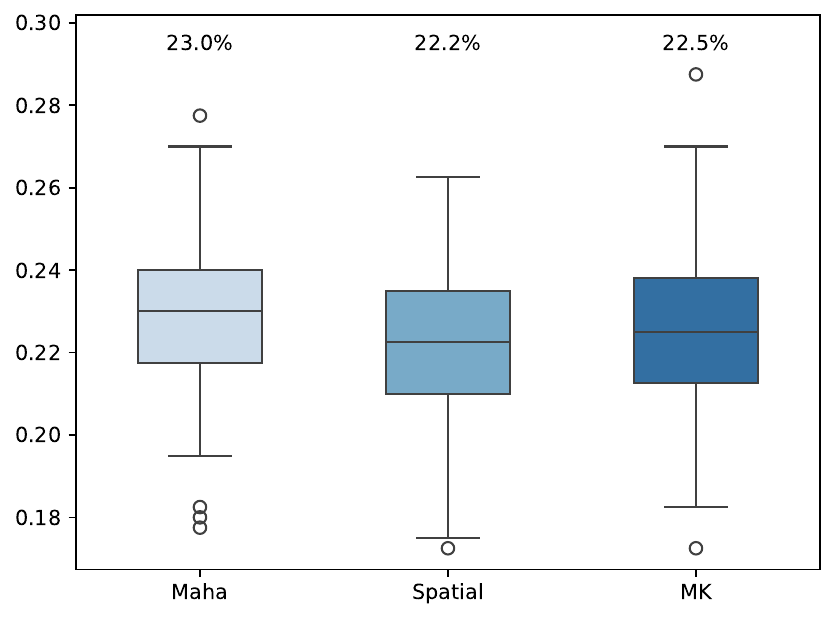}}}
{\subfigure[Tangent von-Mises Fisher]{\includegraphics[width=0.32\textwidth]{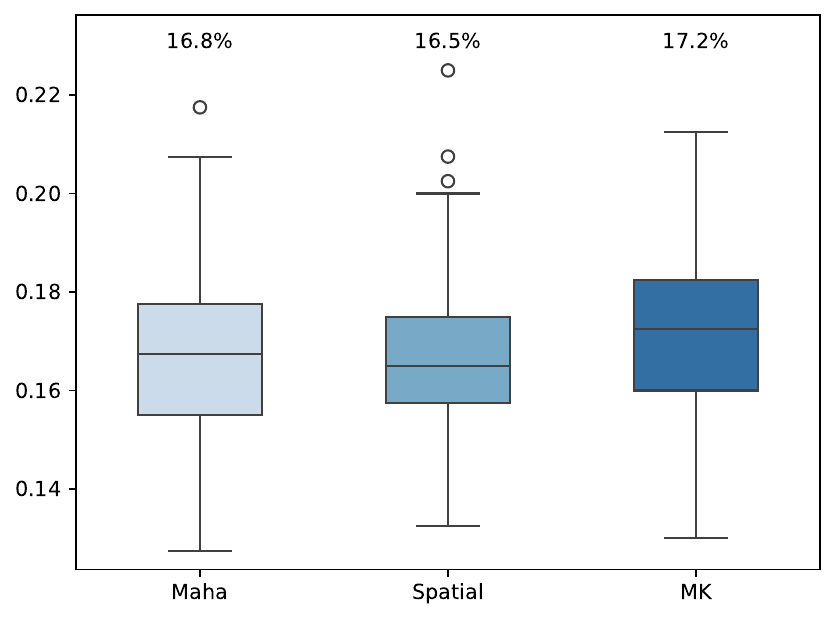}}}
{\subfigure[Non-convex data]{\includegraphics[width=0.32\textwidth]{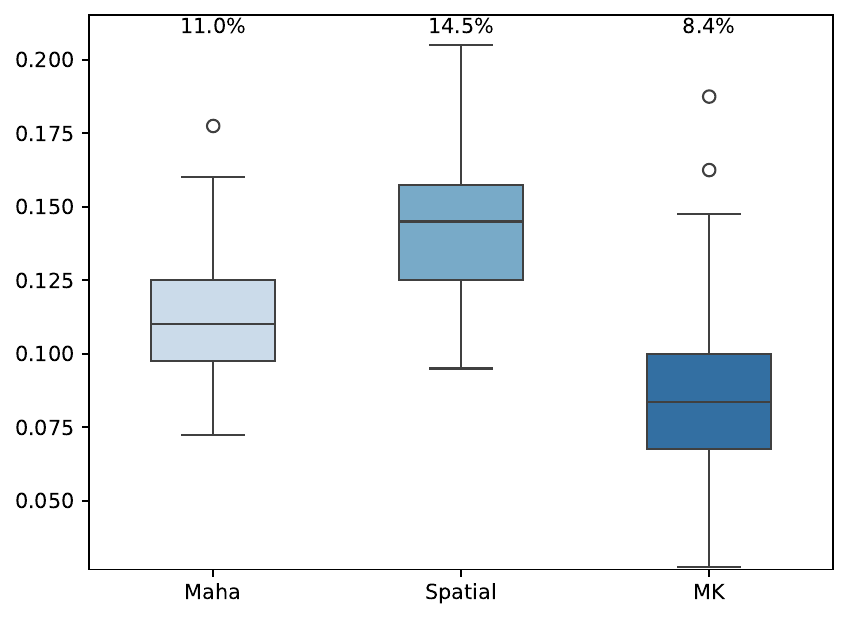}}}

\caption{Misclassification rates with several statistical depths.}
\label{fig_missclass}
\end{figure} 

These results are complemented in Figure \ref{fig_missclass_variouseps} with misclassification rates of the MK depth in the non-convex setting, but changing the regularization parameter. 
This highlights that other values of $\epsilon$ lead to even better classification results.
Thus, beyond the convenient baseline of $\epsilon=0.1$ that performs well overall, one may consider cross-validation of $\epsilon$ if enough data is available.
Figure \ref{fig_missclass_variouseps} also contains further comparison between depth notions in the non-convex setting and with $\epsilon = 0.1$.
There, the training and testing subsets have a larger sample size, either $n=500$ or $n=1000$. This validates results from Figure \ref{fig_missclass}.

\begin{figure}[htbp]
\centering
{\subfigure[Changing the regularization $\epsilon$]{\includegraphics[width=0.35\textwidth]{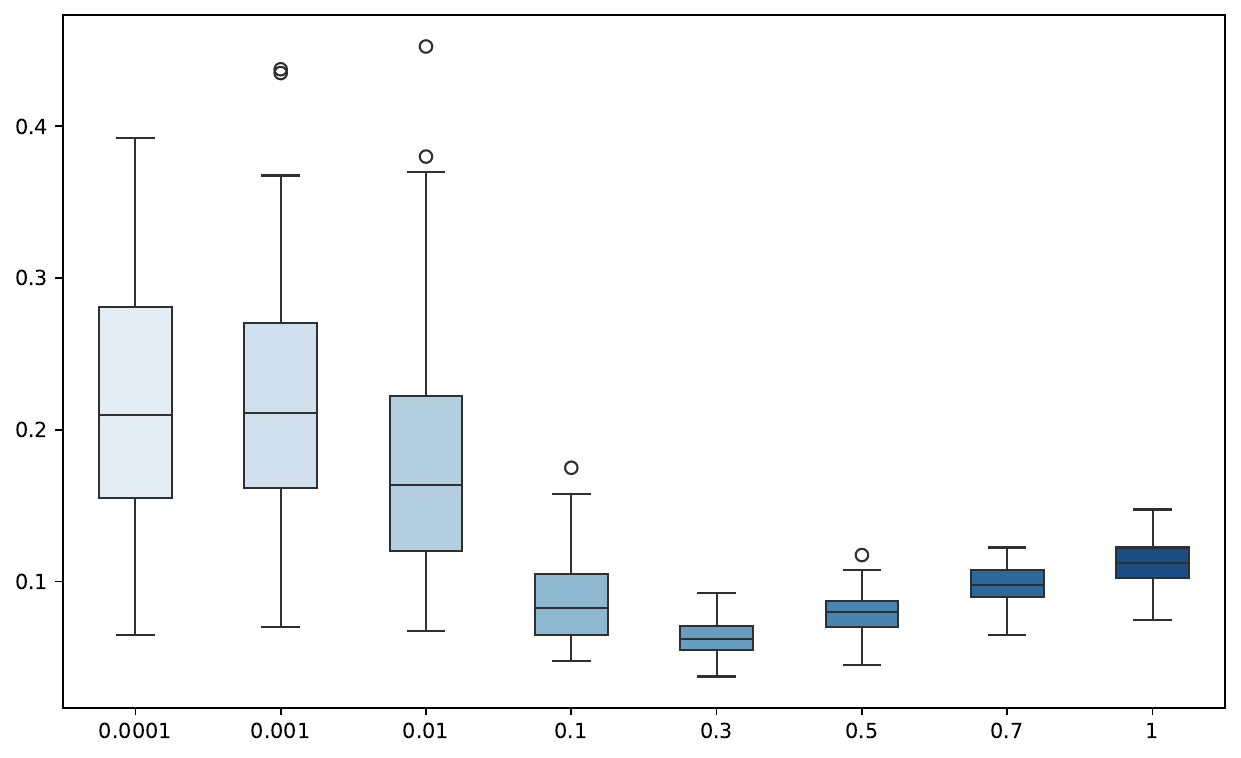}}}
{\subfigure[$n=500$]{\includegraphics[width=0.29\textwidth]{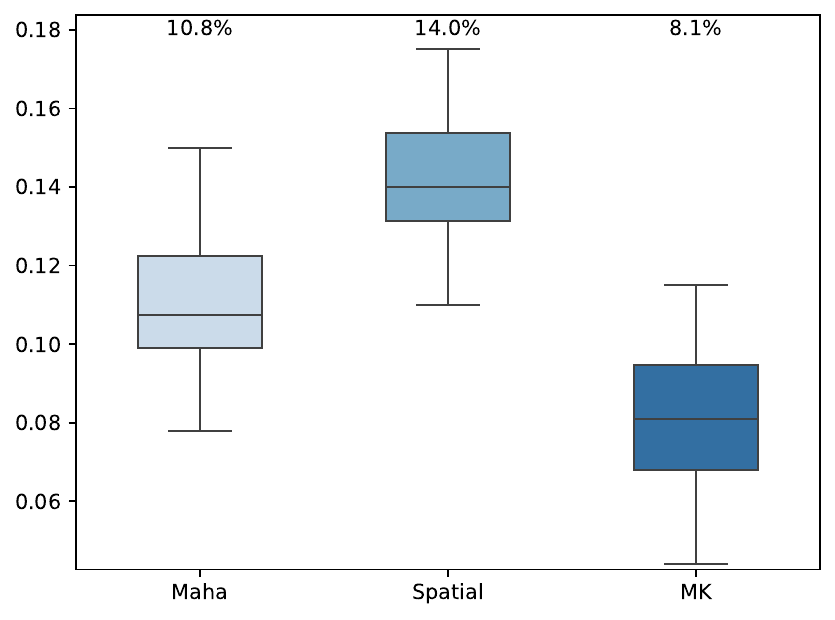}}}
{\subfigure[$n=1000$]{\includegraphics[width=0.29\textwidth]{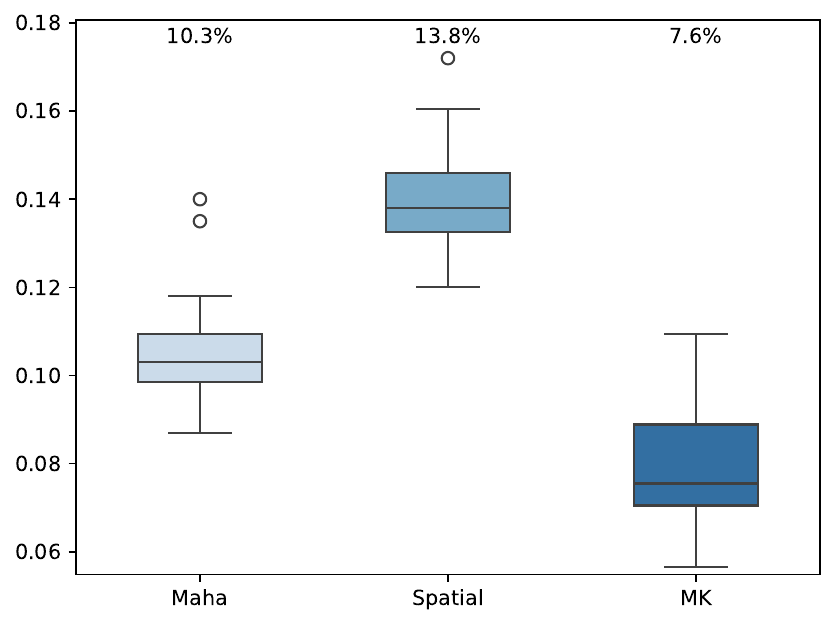}}}

\caption{Further misclassification rates in the non-convex setting, with various $\epsilon$ values and for larger sample size.}
\label{fig_missclass_variouseps}
\end{figure}

\section{Concluding remarks}  \label{sec:diss}

The major limitation of discrete OT for quantiles estimation is the impossibility to provide out-of-sample estimates $Q(x)$ that interpolate the data.
In the present paper, we showed that regularizing by entropy can be used as an alternative, particularly when the focus is on the quantile contours or the volumes of quantile regions. 
Still, we emphasize that this regularization loses the distribution-freeness of the ranks, which is crucial for testing purposes, \cite{hallin2022nonparametric}. Thus, the choice of using OT or EOT shall depend on the considered task.

Our numerical scheme leverages spherical Fourier series to construct a stochastic gradient descent to solve \textit{continuous} OT in the limit of the iterations. 
This is particularly useful when the number of observations $n$ is large and/or when data arrives sequentially.
A natural perspective would be to extend these results to other manifolds, in light with the recent work \cite{hallin2024quantilesManifold}, or to the statistics of circular data, discussed in detail in \cite{Hundrieser2022}. 
The main limitation of the present work concerns the dimensionality, 
even though most applications in directional statistics has been implemented in dimension $d=3$. 
This is inherited from the computation of spherical harmonics, since we use the Python library \textit{pyshtools} \cite{wieczorek2018shtools}. 
However, our implementation could in principle be extended to $d>3$ from other spherical harmonics transforms.

\appendix 
\section*{Appendix}
\renewcommand\thesection{\Alph{section}}

The supplementary material gathers detailed proofs for all the results related to the differentiation of entropic maps and potentials. 
It also describes how the explicit forms of spherical MK quantiles from \cite{hallin2022nonparametric} are computed in our numerical experiments, as they simplify in dimension $d=3$.

\section{Explicit forms}\label{explicitforms}

Closed-form expressions of $\FF$ for rotationally invariant distributions were given in \cite{hallin2022nonparametric} allowing to deduce the inverse map $\QQ$. 
Because part of our numerical experiments rely on these, this section gathers such explicit forms, as they simplify in dimension $d=3$. 

Let $Z\sim \nu$ be such a random vector with axis $\pm \theta_M$. Then, assume that $\nu$ has density 
$$
z \in \Sph \mapsto c_f f( z^T\theta_M ),
$$
for $f$ some positive \textit{angular function} and $c_f$ a normalizing constant.
For $r \in [-1,1]$, denote by
$$
F_f(r) = \int_{-1}^r f(s) ds / \int_{-1}^1 f(s) ds
$$
the distribution function of $Z^T\theta_M$ and by $Q_f = F_f^{-1}$ its quantile function.
Then, letting $F_f^*(r) = 2F_f(r) -1$, the directional distribution function of $Z$ writes 
\begin{equation}\label{explicitFFrotinv}
\FF(z) = F_f^*(z^T\theta_M) \theta_M + \sqrt{1 - F_f^*(z^T\theta_M)^2} S_{\theta_M}(z).
\end{equation}
For instance, taking $f(s) = \exp(\kappa z^T\theta_M)$ corresponds to the von Mises-Fisher distribution with location parameter $\theta_M$ and concentration parameter $\kappa \in \RR_+$.
Crucially, the transport \eqref{explicitFFrotinv} reduces to univariate transport along the axis $\theta_M = \FF(\theta_M)$.
If $\theta_M = (0,0,1)^T$, that is to say \textit{up to some rotation} thanks to equivariance properties shown in \cite{hallin2022nonparametric}, this corresponds to changing the latitude $w.r.t.$ the usual coordinate system
\begin{equation}\label{Phispheric_coord}
x = \Phi(\theta,\phi) := (\cos \phi \sin \theta, \sin \phi\sin\theta,\cos\theta).
\end{equation}
Indeed, as soon as $\theta_M = (0,0,1)^T$, 
\begin{equation*}
\FF(z) = F_f^*(z_3) \theta_M + \sqrt{1 - F_f^*(z_3)^2} \frac{(z_1,z_2,0)^T}{\Vert (z_1,z_2,0)^T \Vert}.
\end{equation*}
The third coordinate is changed to $F_f^*(z_3)$, and the other coordinates are adapted to the constraint $\FF(z)\in\Sph$.
This rewrites, in accordance with \eqref{Phispheric_coord},
\begin{equation}\label{explicitFFrotinv2}
\FF(z) = \FF(\Phi(\theta,\phi))= \Phi(\overline{\theta},\phi)
\hspace{1cm} \mbox{for} \hspace{1cm} 
\overline{\theta}= \arccos\Big((F_f^*)(z_3)\Big).
\end{equation}
Consequently, to get the inverse map $\QQ = \FF^{-1}$, it suffices to change the pseudo latitude of $\FF(z) \in \mathcal{C}_\tau^U$ $w.r.t.$ the axis $\pm \theta_M$. 
If $x = \FF(z)$, $x_3 = F_f^*(z_3)$ and $z_3 = (F_f^*)^{-1}(x_3)$, that is
\begin{equation}\label{explicitQQrotinv}
\QQ(x) = \QQ(\Phi(\theta,\phi)) = \Phi(\tilde{\theta},\phi) 
\hspace{1cm} \mbox{for} \hspace{1cm} 
\tilde{\theta}= \arccos\Big((F_f^*)^{-1}(x_3)\Big).
\end{equation}
As highlighted in \cite{hallin2022nonparametric}, this shows that MK quantile contours coincide with Mahalanobis ones from \cite{ley2014new} under the rotationally symmetric model.

\section{First-order differentiation of entropic potentials}

This section is dedicated to the proof of Proposition \ref{entropicQeeFee}, that gives explicit spherical entropic maps.

\subsection{Euclidean derivatives} 

\begin{prop}\label{entropicmaps}
Denote by 
\begin{equation}\label{def_ge}
g_\ee(x,z) = \frac{-d(x,z)}{\sqrt{1 - \langle x,z\rangle^2}} \exp \Big( \frac{\uu_\ee(x) - c(x,z) + \uu_\ee^{c,\ee}(z)}{\ee} \Big).
\end{equation}
Then, the Euclidean partial derivatives of $\uu_\ee$ admit the closed-form expression
\begin{equation}\label{grad_u_ee}
\partial_{x_i} \uu_\ee(x) = \int z_i g_\ee(x,z) d\nu(z).
\end{equation}
Similarly, the Euclidean gradient of $\uu_\ee^{c,\ee}$ verifies
\begin{equation}\label{grad_u_ee_c_ee}
\partial_{z_i} \uu_\ee^{c,\ee}(z) = \int x_i g_\ee(x,z) d\mu_\Sph(x).
\end{equation} 
\end{prop}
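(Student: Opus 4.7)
The plan is to derive both identities by differentiating the two fixed-point relations characterizing the optimal entropic potentials. From the symmetric reading of \eqref{charac_c_transf} explained in the paper, the pair $(\uu_\ee,\uu_\ee^{c,\ee})$ satisfies
\begin{equation*}
\uu_\ee(x) = -\ee \log \int_\Sph \exp\!\left(\frac{\uu_\ee^{c,\ee}(z) - c(x,z)}{\ee}\right) d\nu(z),
\end{equation*}
together with the original definition \eqref{eq:reg-c-transform}
\begin{equation*}
\uu_\ee^{c,\ee}(z) = -\ee \log \int_\Sph \exp\!\left(\frac{\uu_\ee(x) - c(x,z)}{\ee}\right) d\mu_\Sph(x).
\end{equation*}
I will differentiate each of these with respect to $x_i$ (resp.\ $z_i$) under the integral sign and collapse the normalizing constant using the same fixed-point identity.

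The computation is straightforward once one has the Euclidean partial derivatives of the cost. Writing $c(x,z) = \tfrac{1}{2}\arccos(\langle x,z\rangle)^2$ and applying the chain rule yields
\begin{equation*}
\partial_{x_i} c(x,z) = -\frac{d(x,z)\, z_i}{\sqrt{1-\langle x,z\rangle^2}}, \qquad \partial_{z_i} c(x,z) = -\frac{d(x,z)\, x_i}{\sqrt{1-\langle x,z\rangle^2}}.
\end{equation*}
Substituting into the derivative of the first fixed-point relation, and using that the denominator $\int \exp((\uu_\ee^{c,\ee}(z)-c(x,z))/\ee) d\nu(z)$ equals $\exp(-\uu_\ee(x)/\ee)$ by the very same relation, one obtains
\begin{equation*}
\partial_{x_i}\uu_\ee(x) = \int_\Sph \partial_{x_i} c(x,z)\, \exp\!\left(\frac{\uu_\ee(x)+\uu_\ee^{c,\ee}(z)-c(x,z)}{\ee}\right) d\nu(z) = \int_\Sph z_i\, g_\ee(x,z)\, d\nu(z),
\end{equation*}
which is exactly \eqref{grad_u_ee}. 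The identity \eqref{grad_u_ee_c_ee} follows in a perfectly symmetric way from the second fixed-point relation, noting that the density $\exp((\uu_\ee(x)+\uu_\ee^{c,\ee}(z)-c(x,z))/\ee)$ appearing after normalization is the same in both computations and coincides with the kernel appearing in $g_\ee(x,z)$ up to the factor $-d(x,z)/\sqrt{1-\langle x,z\rangle^2}$.

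The main technical point is the justification of differentiation under the integral sign, since the factor $d(x,z)/\sqrt{1-\langle x,z\rangle^2}$ blows up at the antipode $z=-x$. Parameterizing a neighborhood of $-x$ by geodesic distance $\theta$ to $-x$, one has $1-\langle x,z\rangle^2 = \sin^2\theta$ and $d(x,z) = \pi - \theta$, so the singularity is of order $1/\theta$ while the area element is $\sin\theta\, d\theta\, d\phi \sim \theta\, d\theta\, d\phi$; the two cancel and the integrand is locally integrable with respect to $\mu_\Sph$ (and, for $\mu_\Sph$-a.e.\ $x$, with respect to $\nu$). Combined with the uniform boundedness of $\uu_\ee$ and $\uu_\ee^{c,\ee}$ supplied by Proposition \ref{uu_ee_continu_deriv}, this provides a local dominating function and legitimizes the interchange of derivative and integral via dominated convergence, completing the proof.
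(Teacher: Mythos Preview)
Your argument is correct and follows the same route as the paper: differentiate the fixed-point identity $\uu_\ee = (\uu_\ee^{c,\ee})^{c,\ee}$ under the integral sign, insert $\partial_{x_i} c(x,z) = -d(x,z)\,z_i/\sqrt{1-\langle x,z\rangle^2}$, and absorb the normalizing integral back into the exponential via the same identity; the second formula is obtained symmetrically from \eqref{eq:reg-c-transform}.

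Two small remarks. First, invoking Proposition~\ref{uu_ee_continu_deriv} for the boundedness of $\uu_\ee$ and $\uu_\ee^{c,\ee}$ is circular: in the paper, that proposition is proved \emph{via} Proposition~\ref{Hessian_uu_ee}, which itself relies on \eqref{grad_u_ee}. The paper instead gets the required boundedness directly from \cite{Nutz2022entropic}[Lemma~2.1], and you should do the same. Second, your explicit handling of the antipodal singularity of $\partial_{x_i} c$ (the $1/\theta$ blow-up cancelled by the $\sin\theta$ area element) is in fact more careful than the paper's own justification, which only asserts boundedness of the integrand of $J$ rather than of its $x$-derivative; your observation that the argument is clean for $\mu_\Sph$ but only holds for $\mu_\Sph$-a.e.\ $x$ when integrating against a general $\nu$ is an honest caveat that the paper does not address.
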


Denoting by $v = \uu_\ee^{c,\ee}$, rewriting the optimality condition $\uu_\ee = (\uu_\ee^{c,\ee})^{c,\ee}$ gives
\begin{equation*}
\uu_\ee(x) = - \ee \log \int \exp \Big( \frac{v(z) - c(x,z)}{\ee} \Big) d\nu(z).
\end{equation*}
By the chain rule, 
\begin{equation}\label{rappel_v}
\partial_{x_i} \uu_\ee(x) = -\ee \frac{\partial_{x_i} J(x)}{J(x)} 
\quad
\mbox{for}
\quad 
J(x) = \int \exp \Big( \frac{v(z) - c(x,z)}{\ee} \Big) d\nu(z).
\end{equation}
We now turn to the differentiation of $J$.
As shown in \cite{Nutz2022entropic}[Lemma 2.1], 
$$
\inf_{x\in \XX} \{c(x,z) - \uu_\ee(x) \} \leq v(z) \leq \int c(x,z) d\mu_\Sph(x).
$$
By boundedness of $\Sph$, $v$ is bounded and so is the integrand in \eqref{rappel_v}. 
As we deal with probability measures, this justifies using the differentiation under the integral sign, that is 
\begin{equation*}
\partial_{x_i} J(x) = \int \partial_{x_i} \exp \Big( \frac{v(z) - c(x,z)}{\ee} \Big) d\nu(z),
\end{equation*}
which induces
\begin{equation}\label{partialderJ}
\partial_{x_i} J(x) = \int -\frac{\partial_{x_i} c(x,z)}{\ee} \exp \Big( \frac{v(z) - c(x,z)}{\ee} \Big) d\nu(z).
\end{equation}
Fix $z\in \Sph$, so 
\begin{equation}\label{partialderC}
\partial_{x_i} c(x,z) =  d(x,z)\partial_{x_i} d(x,z)
\quad
\mbox{and}
\quad
\partial_{x_i} d(x,z) = \frac{-1}{\sqrt{1 - \langle x,z\rangle^2}}z_i.
\end{equation}
Combining \eqref{partialderJ} with \eqref{partialderC}, 
\begin{equation}\label{partialderJ2}
\partial_{x_i} J(x) = \int \frac{z_i}{\ee} \frac{d(x,z)}{\sqrt{1-\langle x,z\rangle^2}} \exp \Big( \frac{v(z) - c(x,z)}{\ee} \Big) d\nu(z).
\end{equation}
Plugging \eqref{partialderJ2} in \eqref{rappel_v} gives \eqref{grad_u_ee_c_ee}, where $g_\ee$ defined in \eqref{def_ge} shows up because 
\begin{align}
\exp \Big( \frac{\uu_\ee(x) - c(x,z) + \uu_\ee^{c,\ee}(z)}{\ee} \Big) =  \frac{\exp \Big( \frac{\uu_\ee^{c,\ee}(z) - c(x,z)}{\ee} \Big)}{
\int \exp \Big( \frac{\uu_\ee^{c,\ee}(y) - c(x,y)}{\ee} \Big) d\nu(y)}.
\end{align}
The same arguments on $\uu_\ee^{c,\ee}(z)$ yield \eqref{grad_u_ee}.
\hfill
\demend

\subsection{Proof of Proposition \ref{entropicQeeFee}}

Using the Euclidean derivatives obtained in Proposition \ref{entropicmaps},
$$
\nabla \uu_\ee(x) = \rho_x  \int z g_\ee(x,z) d\nu(z)
\quad
\mbox{and}
\quad
\nabla \uu_\ee^{c,\ee}(z) = \rho_z  \int x g_\ee(x,z) d\mu_\Sph(x).
$$
But this is equivalent to 
$$
\nabla \uu_\ee(x) =  \int -\frac{d(x,z)}{\sqrt{1-\langle x,z\rangle^2}} \rho_x(z) 
\exp \Big( \frac{\uu_\ee(x) - c(x,z) + \uu_\ee^{c,\ee}(z)}{\ee} \Big) d\nu(z),
$$
and
$$
\nabla \uu_\ee^{c,\ee}(z) = \int -\frac{d(x,z)}{\sqrt{1-\langle x,z\rangle^2}} \rho_z(x)
\exp \Big( \frac{\uu_\ee(x) - c(x,z) + \uu_\ee^{c,\ee}(z)}{\ee} \Big) d\mu_\Sph(x).
$$
There, one recovers $\Log_x=\Exp_x^{-1}$ as in 
\begin{equation}\label{Logx}
    \Log_x(z) = \frac{d(x,z)}{\sqrt{1-\langle x,z \rangle^2}}\rho_x z = d(x,z) \frac{\rho_x (z-x)}{\Vert \rho_x (z-x) \Vert},
\end{equation}
which gives the result.
\hfill
\demend

\section{Second-order differentiation of entropic potentials}

This section is dedicated to the proof of Proposition \ref{uu_ee_continu_deriv}, that states that the series of spherical harmonics of $\uu_\ee$ belongs to $\ell_1$.

\subsection{Euclidean second-order derivatives}
Note that one can find in \cite{ferreira2014concepts} expressions for first-order and second-order derivatives of the cost $c$. The next proposition gives second-order partial derivatives of the potential $\uu_\ee$.

\begin{prop}\label{Hessian_uu_ee}
The potential $\uu_\ee$ is twice-differentiable everywhere, and 
$$
\frac{\partial^2 \uu_\ee}{\partial_{x_i} \partial_{x_j}}(x) 
= \int \tilde{c}_{ij}(x,z) \exp \Big( \frac{\uu_\ee(x) - c(x,z) + \uu_\ee^{c,\ee}(z) }{\ee} \Big) d\nu(z)
+ \frac{\partial_{x_i}\uu_\ee(x) \partial_{x_j}\uu_\ee(x)}{\ee},
$$
where 
\begin{align*}
    \tilde{c}_{ij}(x,z)
    &= 
    \frac{\partial^2 c(x,z)}{\partial_{x_i} \partial_{x_j}} 
    - \frac{\partial_{x_i}c(x,z)\partial_{x_j}c(x,z)}{\ee}.
\end{align*}
Besides, the same holds for $\uu_\ee^{c,\ee}$ and
$$
\frac{\partial^2 \uu_\ee^{c,\ee} }{\partial_{z_i} \partial_{z_j}}(z) 
= \int \tilde{c}_{ij}(z,x) \exp \Big( \frac{\uu_\ee(x) - c(x,z) + \uu_\ee^{c,\ee}(z) }{\ee} \Big) d\nu(z)
+ \frac{\partial_{z_i}\uu_\ee^{c,\ee}(z) \partial_{z_j}\uu_\ee^{c,\ee}(z)}{\ee}.
$$
\end{prop}

\begin{proof}
The same calculus can be found in \cite{genevay:tel-02458044}[Lemma 3], up to the fact that one can recognize partial derivatives of $\uu_\ee$, that is \eqref{grad_u_ee}, in the result, at the very end of our proof. 
First of all, $g_\ee$ is bounded by using \cite{Nutz2022entropic}[Lemma 2.1] and the compacity of $\Sph$. Thus, one can differentiate in \eqref{grad_u_ee} under the integral sign, and 
\begin{equation}\label{partial2u_ee_0}
\frac{\partial^2 \uu_\ee}{\partial_{x_i} \partial_{x_j}}(x) = \int \partial_{x_j} z_i g_\ee(x,z) d\nu(z).
\end{equation}
In view of using classical rules of differentiation, note that $z_i g_\ee(x,z) = (\partial_{x_i}c(x,z)) G(x,z)$, for 
$$
G(x,z) = \exp \Big( \frac{\uu_\ee(x) - c(x,z) + \uu_\ee^{c,\ee}(z) }{\ee} \Big).
$$
Besides, 
$$
\partial_{x_j} G(x,z) = \frac{1}{\ee} G(x,z) \partial_{x_j}(\uu_\ee(x) - c(x,z)).
$$
As a byproduct, 
%using both \eqref{grad_u_ee} and $(\partial_{x_j}c(x,z)) G(x,z) = z_j g_\ee(x,z)$,
\begin{align}
    \partial_{x_j} z_i g_\ee(x,z) &= 
\frac{\partial^2 c(x,z)}{\partial_{x_i} \partial_{x_j}} G(x,z)
+ \partial_{x_i}c(x,z) \frac{1}{\ee} G(x,z) 
\Big( \partial_{x_j}\uu_\ee(x)- \partial_{x_j}c(x,z) \Big),\\
&= G(x,z) \Big( 
\frac{\partial^2 c(x,z)}{\partial_{x_i} \partial_{x_j}}  
+ \partial_{x_i}c(x,z) 
\frac{\partial_{x_j}\uu_\ee(x)- \partial_{x_j}c(x,z)}{\ee}\Big).
\label{derivIntegrandpartial2}
\end{align}
Plugging \eqref{derivIntegrandpartial2} in \eqref{partial2u_ee_0}
and using that $z_i g_\ee(x,z) = (\partial_{x_i}c(x,z)) G(x,z)$ when rearranging, 
\begin{align*}
    \frac{\partial^2 \uu_\ee}{\partial_{x_i} \partial_{x_j}}(x) 
    &= \int G(x,z) \Big(
    \frac{\partial^2 c(x,z)}{\partial_{x_i} \partial_{x_j}} 
    - \frac{\partial_{x_i}c(x,z)\partial_{x_j}c(x,z)}{\ee} \Big)
+ \frac{\partial_{x_j}\uu_\ee(x)}{\ee} z_i g_\ee(x,z) d\nu(z),\\
&= \int G(x,z) \Big(
    \frac{\partial^2 c(x,z)}{\partial_{x_i} \partial_{x_j}} 
    - \frac{\partial_{x_i}c(x,z)\partial_{x_j}c(x,z)}{\ee} \Big) d\nu(z)
+ \frac{1}{\ee}\partial_{x_i}\uu_\ee(x) \partial_{x_j}\uu_\ee(x).
\end{align*}
where we also used the explicit derivatives of $\uu_\ee$ from
\eqref{grad_u_ee}. 
The Hessian of $\uu_\ee^{c,\ee}$ follows by symmetry. 
\end{proof}

\subsection{Proof of Proposition \ref{uu_ee_continu_deriv}}
From Proposition \ref{Hessian_uu_ee}, $\uu_\ee$ is twice-differentiable everywhere, that gives us the continuity of $\QQ_\ee$, and of $\partial_{x_i}\uu_\ee$.
In the expression of the second-order partial derivatives given in \eqref{Hessian_uu_ee}, the term $\frac{1}{\ee}\partial_{x_i}\uu_\ee(x)\partial_{x_j}\uu_\ee(x)$ is thus continuous. 
The remaining term takes the form of a parameter-dependant integral, whose integrand is continuous and bounded. 
Thus, the result follows by a direct application of the theorem for continuity under the integral sign, and by using the property that a sequence of spherical harmonics belongs to $\ell_1$ for functions that are twice continuously differentiable, see \cite{kalf1995expansion}[Theorem 2].

\bibliographystyle{siam}
\bibliography{Spherical.bib}

\end{document}